\title{Testing Self-Reducible Samplers}
\author{
Rishiraj Bhattacharyya\textsuperscript{\rm 1}\equalcontrib,
Sourav Chakraborty
\textsuperscript{\rm 2}\equalcontrib,
Yash Pote
\textsuperscript{\rm 3,4}\equalcontrib,
Uddalok Sarkar
\textsuperscript{\rm 2}\equalcontrib,
Sayantan Sen\textsuperscript{\rm 3}\equalcontrib
 }
\newcommand{\size}[1]{\left| #1 \right|}
\newcommand{\E}{\mathop{\mathbb{E}}}
\newcommand{\remove}[1]{}
\newcommand{\cI}{\mathcal{I}}
\newcommand{\cM}{\mathcal{M}}
\newcommand{\cS}{\mathcal{S}}
\newcommand{\cL}{\mathcal{L}}
\newcommand{\cA}{\mathcal{A}}
\newcommand{\cD}{\mathcal{D}}
\newcommand{\cP}{\mathcal{P}}
\newcommand{\cG}{\mathcal{G}}
\newcommand{\cW}{\mathcal{W}}
\newcommand{\Oh}{\mathcal{O}}
\newcommand{\eps}{\varepsilon}
\theoremstyle{plain}
\newtheorem{theo}{Theorem}
\newtheorem{lem}[theo]{Lemma}
\newtheorem{cl}[theo]{Claim}
\theoremstyle{definition}
\newtheorem{defi}[theo]{Definition}
\newcommand{\lcms}{\texttt{LxtCMSGen}}
\newcommand{\lqck}{\texttt{LxtQuickSampler}}
\newcommand{\lsts}{\texttt{LxtSTS}}
\newcommand*{\permcomb}[4][0mu]{{{}^{#3}\mkern#1#2_{#4}}}
\newcommand*{\perm}[1][-3mu]{\permcomb[#1]{P}}
\newcommand*{\comb}[1][-1mu]{\permcomb[#1]{C}}
\newcommand{\algoname}[1]{\ensuremath{\mathsf{#1}}}
\newcommand{\accept}{ACCEPT }
\newcommand{\reject}{REJECT }
\newcommand{{\dinf}}{{d_\infty}}
\newcommand{\subcond}{$\mathsf{SubCond}$ }
\newcommand{\gbas}{\algoname{GBAS} }
\newcommand{\dtv}{\mathsf{d_{TV}} }
\newcommand{\dtvp}{\mathsf{d^\psi_{TV}} }
\newcommand{\estimate}{\algoname{Est} }
\newcommand{\bk}{$\mathsf{Barbarik}$\xspace}
\newcommand{\wbk}{$\mathsf{Barbarik2}$\xspace}
\newcommand{\wbkk}{$\mathsf{Barbarik3}$\xspace}
\newcommand{\teq}{$\mathsf{Teq}$\xspace}
\newcommand{\whst}{Self-reducible-sampler-tester}
\newcommand{\iws}{\cI_{\cW}\xspace}
\newcommand{\igs}{\cI_{\cG}\xspace}
\newcommand{\wfl}{$\mathsf{Flash}$\xspace}
\newcommand{\gst}{\ensuremath{\mathsf{CubeProbeEst}}\xspace}
\newcommand{\gstt}{\ensuremath{\mathsf{CubeProbeTester}}\xspace}
\newcommand{\bern}{\cP}
\begin{document}

\maketitle

\begin{abstract}
Samplers are the backbone of the implementations of any randomised algorithm. Unfortunately, obtaining an efficient algorithm to test the correctness of samplers is very hard to find. Recently, in a series of works, testers like $\mathsf{Barbarik}$, $\mathsf{Teq}$, $\mathsf{Flash}$ for testing of some particular kinds of samplers, like CNF-samplers and Horn-samplers, were obtained. But their techniques have a significant limitation because one can not expect to use their methods to test for other samplers, such as perfect matching samplers or samplers for sampling linear extensions in posets. 
In this paper, we present a new testing algorithm that works for such samplers and can estimate the distance of a new sampler from a known sampler (say, uniform sampler).  

Testing the identity of distributions is the heart of testing the correctness of samplers. This paper's main technical contribution is developing a new distance estimation algorithm for distributions over high-dimensional cubes using the recently proposed sub-cube conditioning sampling model. Given subcube conditioning access to an unknown distribution $P$, and a known distribution $Q$ defined over $\{0,1\}^n$, our algorithm $\mathsf{CubeProbeEst}$ estimates the variation distance between $P$ and $Q$ within additive error $\zeta$ using $\mathcal{O}\left({n^2}/{\zeta^4}\right)$ subcube conditional samples from $P$.  Following the testing-via-learning paradigm, we also get a tester which distinguishes between the cases when $P$ and $Q$ are $\varepsilon$-close or $\eta$-far in variation distance with probability at least $0.99$ using $\mathcal{O}({n^2}/{(\eta-\varepsilon)^4})$ subcube conditional samples.

The estimation algorithm in the sub-cube conditioning sampling model helps us to design the first tester for self-reducible samplers. The correctness of the testers is formally proved. On the other hand, we implement our algorithm to create $\mathsf{CubeProbeEst}$ and use it to test the quality of three samplers for sampling linear extensions in posets. 

\end{abstract}

\section{Introduction}

\noindent Sampling algorithms play a pivotal role in enhancing the efficiency and accuracy of data analysis and decision-making across diverse domains~\cite{chandra1992constraint,yuan2004simplifying,naveh2007constraint,mironov2006applications,soos2009extending,morawiecki2013sat, ashur2017automated}. With the exponential surge in data volume, these algorithms provide the means to derive meaningful insights from massive datasets without the burden of processing the complete information. Additionally, they aid in pinpointing and mitigating biases inherent in data, ensuring the attainment of more precise and equitable conclusions. From enabling statistical inferences to propelling advancements in machine learning, safeguarding privacy, and facilitating real-time decision-making, sampling algorithms stand as a cornerstone in extracting information from the vast data landscape of our modern world.

However, many advanced sampling algorithms are often prohibitively slow (hash-based techniques of \cite{chakraborty2013scalable, ermon2013embed, chakraborty2014distribution, meel2015constrained} and MCMC-based methods of~\cite{andrieu2003introduction, brooks2011handbook, jerrum1998mathematical}) or lack comprehensive verification (\cite{ermon2012uniform}, \cite{dutra2018efficient}, \cite{GSCM21}). Many popular methods like ``statistical tests'' rely on heuristics without guarantees of their efficacy. Utilizing unverified sampling algorithms can lead to significant pitfalls, including compromised conclusion accuracy, potential privacy, and security vulnerabilities. Moreover, the absence of verification hampers transparency and reproducibility, underscoring the critical need for rigorous validation through testing, comparison, and consideration of statistical properties. Consequently, a central challenge in this field revolves around designing tools to certify sampling quality and verify correctness, which necessitates overcoming the intricate task of validating probabilistic programs and ensuring their distributions adhere to desired properties.

A notable breakthrough in addressing this verification challenge was achieved by \cite{chakraborty2019testing}, who introduced the statistical testing framework known as ``\bk''. This method proved instrumental in testing the correctness of uniform CNF (Conjunctive Normal Form) samplers by drawing samples from conditional distributions. \bk demonstrated three key properties: accepting an almost correct sampler with high probability, rejecting a far-from-correct sampler with high probability, and rejecting a ``well-behaved'' but far-from-correct sampler with high probability. There have been a series of follow-up works~\cite{meel2020testing, PM21, PM22, pmlr-v206-banerjee23a}. However, in this framework, conditioning is achieved using a gadget that does not quite generalize to applications beyond CNF sampling. For instance, for linear-extension sampling~\cite{huber2014near}, where the goal is to sample a linear ordering agreeing with a given poset,  the test requires that the post-conditioning residual input be a supergraph of the original input, with the property that it has exactly two \emph{user-specified} linear-extensions. This requirement is hard to fulfill in general. On the other hand, a generic tester that would work for any sampler implementation without any additional constraints and simultaneously be sample efficient is too good to be true \cite{tit/Paninski08}. From a practical perspective, the question is: \emph{Can we design an algorithmic framework for testers that would work for most deployed samplers and still have practical sample complexity?}

We answer the question positively. We propose algorithms that offer a generic approach to estimating the distance between a known and an unknown sampler, assuming both follow the ubiquitous self-reducible sampling strategy. Our techniques follow a constrained sampling approach, extending its applicability to wide range of samplers without mandating such specific structural conditions. A key foundational contribution of this paper includes leveraging the subcube conditional sampling techniques \cite{bhattacharyya2018property} and devising a method to estimate the distance between samplers – a challenge often more intricate than simple correctness testing.

\noindent\textbf{Organization of our paper} We first present the preliminaries followed by a description of our 
results and their relevance. We then give a detailed description of our main algorithms \gst and \gstt. The detailed theoretical analysis is presented in the supplementary material. We only present a high-level technical overview. Finally, we present our experimental results and conclude.

\section{Preliminaries}\label{sec:prelim}
\noindent In this paper, we are dealing with discrete probability distributions whose sample space is an $n$-dimensional Boolean hypercube, $\{0,1\}^n$. For a distribution $\mathcal{D}$ over a universe $\Omega$, and for any $x \in \Omega$,  we denote by $\mathcal{D}(x)$ the probability mass of the point $x$ in $\mathcal{D}$. 
$[n]$ denotes the set $\{1, \ldots,n\}$.
For concise expressions and readability, we use the asymptotic complexity notion of $\widetilde{\Oh}$, where we hide polylogarithmic dependencies of the parameters.   

\paragraph{Samplers, Estimators, and Testers}
A sampler $\cI: \mathsf{Domain} \to \mathsf{Range}$ is a randomized algorithm which, given an input $x\in \mathsf{Domain}$, outputs an element in $\mathsf{Range}$. For a sampler $\cI$, $\mathcal{D}^{\cI, \psi}$ denotes the probability distribution of the output of $\cI$ when the input is $\psi \in \mathsf{Domain}$. In other words, $$\forall x \in \mathsf{Range}, \ \cD^{\cI, \psi}(x) = \Pr[\cI(\psi) = x], $$
where the probability is over the internal random coins of $\cI$. 

\noindent We define a sampler $\iws$ to be a \emph{known sampler} if, for any input $\psi \in \mathsf{Domain}$, we know its probability distribution $\cD^{\iws, \psi}$ explicitly. 
We note that the input $\psi$ depends on the application. For example, in the perfect-matching and linear-extension samplers,  $\psi$ is a graph, whereas, for the CNF sampler, $\psi$ is a CNF formula.

\begin{defi}[{\bf Total variation distance}]\label{defi:tvd}
Let $\iws$ and $\igs$ be two samplers. For an input $\psi \in \mathsf{Domain}$, the variation distance between $\igs$ and $\iws$ is defined as: 
$$\dtvp(\igs, \iws) = \max_{A \subseteq \mathsf{Range}} \{\cD^{\igs, \psi}(A) - \cD^{\iws, \psi}(A)\}.$$
\end{defi}

\begin{defi}[{\bf $(\zeta, \delta$)-approx $\dtv$ estimator}] 
A $(\zeta, \delta$)-approx $\dtv$ estimator is a randomized approximation algorithm that given two sampler $\igs$ and $\iws$, an input $\psi$, tolerance parameter $\zeta \in (0,1/3]$ and a confidence parameter $\delta\in (0,1)$, with probability $(1-\delta)$ returns an estimation $\widehat{\mathsf{dist}^{\psi}}$ of $\dtvp(\igs,\iws)$ such that:
$$\dtvp(\igs,\iws) - \zeta \leq \widehat{\mathsf{dist}^{\psi}} \leq \dtvp(\igs,\iws) + \zeta$$
\end{defi}

\begin{defi}[{\bf $\eps$-closeness and $\eta$-farness}]\label{defi:wclosefar}
Consider any sampler $\igs$. $\igs$ is said to be $\eps$-close to another sampler $\iws$ on input $\psi$, if $\dtvp\left(\igs,\iws\right) \leq \eps$ holds. On the other hand, $\igs$ is said to be $\eta$-far from $\iws$ with respect to some input $\psi$ if $\dtvp\left(\igs,\iws\right) \geq \eta$ holds. 
\end{defi}

\begin{defi}[{\bf $(\eps, \eta, \delta)$-identity tester)}]
An $(\eps, \eta, \delta)$-identity tester takes as input an unknown sampler $\igs$, a known sampler $\iws$, an input $\psi$ to the samplers, a tolerance parameter $ \eps \in (0, 1/3)$, an intolerance parameter $\eta \in (0,1]$ with $ \eta> \eps$, a confidence parameter $\delta \in (0,1)$, and with probability at least $(1 - \delta)$: (1) outputs \accept if $\cI_{\cG}$ is $\eps$-close to $\iws$ on input $\psi$, (2) outputs \reject if $\cI_{\cG}$ is $\eta$-far from $\iws$ on input $\psi$.
\end{defi}
\noindent For practical purposes, $\delta$ can be $0.99$ or any close-to-one constant.
From now onwards,  we shall consider the input domain and output range of a sampler to be a Boolean hypercube, that is, $\mathsf{Domain} = \{0,1\}^m$ and $\mathsf{Range} = \{0,1\}^n$ for some integers $m$ and $n$. Therefore the universe of probability distributions of samplers is $n$-dimensional binary strings.

\paragraph{Self-reducible sampler.}
A self-reducible sampler $\cI:\{0,1\}^m \to \{0,1\}^n$ generates a sample $x$ by first sampling a bit and then sampling the rest of the substring. Formally, we can define a self-reducible sampler as follows:
\begin{defi}[{\bf Self-reducible sampler}]
    A sampler $\cI:\{0,1\}^m \to \{0,1\}^n$ is said to be a self-reducible sampler if, for any input $\psi \in \{0,1\}^m$, there exists $\widehat{\psi} \in \{0,1\}^m$ for which the following is true: 
    \[\cD^{\cI, \psi}(x_1x_2...x_n)|_{x_1 = b_1, ..., x_i = b_i} = \cD^{\cI, \widehat{\psi}}(b_1...b_ix_{i+1}...x_n)\]
    where $b_i\in\{0,1\}$ for all $i$.
\end{defi}
\noindent The concept of self-reducibility has been influential in the field of sampling since the work of \cite{jerrum1986random}, which showed the computational complexity equivalence of approximate sampling and counting for problems in \#P. Intuitively, self-reducibility is the idea that one can construct the solution to a given problem from the solutions of subproblems of the same problem. Self-reducibility is a critical requirement for simulating subcube conditioning. Also, it does not hamper the model's generality too much. As observed in \cite{khuller1991planar, grosse2006computing,talvitie2020exact}, all except a few known problems are self-reducible. 

\paragraph{Subcube Conditioning over Boolean hypercubes}
 Let $P$ be a probability distribution over $\{0,1\}^n$. Sampling using subcube conditioning accepts $A_1, A_2, \ldots, A_n \subseteq \{0,1\}$, constructs $S = A_1 \times A_2 \times \ldots \times A_n$ as the condition set, and returns a vector $x = (x_1, x_2, \ldots,x_n)$, such that $x_i \in A_i$, with probability $P(x)/(\sum_{w \in S} P(w))$. If $P(S)$ = 0, we assume the sampling process would return an element from $S$ uniformly at random. A sampler that follows this technique is called a subcube conditioning sampler.

\subsubsection{Linear-Extension of a Poset}
We applied our prototype implementation on verifying linear-extension samplers of a poset. Let us first start with the definition of a poset.

\begin{defi}[{\bf Partially ordered set (Poset)}]
Let $S$ be a set on $k$ elements. A relation $\preceq$ (subset of $S \times S$) is said to be a partial order if $\preceq$ is (i) reflexive ($a \preceq a$ for every $a \in S$) (ii) anti-symmetric ($a \preceq b$ and $b \preceq a$ implies $a=b$ for every $a,b \in S$) and (iii) transitive ($a \preceq b$ and $b \preceq c$ implies $a \preceq c$ for every $a,b,c \in S$). We say $(S, \preceq)$ is a \emph{partially ordered set} or \emph{poset} in short.
If all pairs of $S$ are comparable, that is, for any $a, b \in S$, either $a \preceq b$ or $b \preceq a$ then $(S, \preceq)$ is called a \emph{linear ordered set}.
\end{defi}

\begin{defi}[{\bf Linear-extension of poset}]
A relation $\preceq_l \ \supseteq \ \preceq$ is called a linear-extension of $\preceq$, if $(S,\preceq_l)$ is linearly ordered. Given a poset $\cP=(S,\preceq)$, we denote the set of all possible linear-extensions by $\cL(\cP)$.
\end{defi}

\begin{defi}[{\bf Linear-extension sampler}]
Given a poset $\cP = (S, \preceq)$, a linear-extension sampler $\cI_{Lext}$ samples a possible linear-extension $\preceq_l$ of $\cP$ from the set of all possible linear-extensions $\cL(\cP)$. 
\end{defi}

\paragraph{Linear-extension to Boolean Hypercube}  Let us define a \emph{base} linear ordering on $S$ as $\preceq_l'$. We order the elements of $S$ as $S_1 \preceq_l' S_2 \preceq_l' ... \preceq_l' S_k$ based on $\preceq_l'$, where $k=|S|$. For a poset $\cP = (S, \preceq)$, we construct a $k \times k$ matrix $\cM_\cP$ such that for all $i$, $\cM_\cP(i,i) := 1$ and for all $i \neq j$, if $S_i \preceq S_j$ then $\cM_\cP(i,j) := 1$ and $\cM_\cP(i,j) := 0$ when $S_j \preceq S_i$, if $(S_i, S_j) \notin \preceq$, that is if $S_i, S_j$ are not comparable in $\preceq$, then $\cM_\cP(i,j) := *$. The matrix $\cM_\cP$ is a unique representation of the poset $\cP = (S, \preceq)$. $\cM_\cP$ is anti-symmetric, i.e., the upper triangle of $\cM_\cP$ is exactly the opposite of the lower triangle (apart from the $*$ and the diagonal entries). So only the upper triangle of $\cM_\cP$ without the diagonal entries can represent $\cP$. Now unrolling of the upper triangle of $\cM_\cP$ (without the diagonal) creates a $\{0,1,*\}^{\comb{k}{2}}$ string $x_{\cM_\cP}$. Suppose for a $\cP$ there are $n$ $*$'s in the unrolling. Then we can say sampling a linear-extension of $\cP$ is equivalent to sampling from a $\{0,1\}^n$ subcube of the Boolean hypercube $\{0,1\}^{\comb{k}{2}}$, where $\cP$ induces subcube conditioning by fixing the bits of non-$*$ dimensions. Adding one more new pair, say $(S_{i'}, S_{j'})$, to $\cP$ results in fixing one more bit of $x_{\cM_\cP}$ and vice versa. We introduce a mapping $\mathsf{SubCond}$ that can incorporate a new pair into poset $\cP$ and subsequently fixes the corresponding bit in bit string $x_{\cM_\cP}$. Thus $\mathsf{SubCond}$ provides a method to achieve subcube conditioning on a poset.

\begin{figure}
    \centering
    \scalebox{0.5}{%

\tikzset{every picture/.style={line width=0.75pt}} %

\begin{tikzpicture}[x=0.6pt,y=0.6pt,yscale=-1,xscale=1]
\draw   (210.95,200.87) .. controls (205.42,200.85) and (200.96,196.36) .. (200.97,190.84) .. controls (200.98,185.32) and (205.47,180.85) .. (211,180.87) .. controls (216.52,180.88) and (220.98,185.37) .. (220.97,190.89) .. controls (220.96,196.41) and (216.47,200.88) .. (210.95,200.87) -- cycle ;
\draw   (170.95,135.5) .. controls (165.43,135.49) and (160.96,131) .. (160.98,125.48) .. controls (160.99,119.95) and (165.48,115.49) .. (171,115.5) .. controls (176.52,115.51) and (180.99,120) .. (180.98,125.52) .. controls (180.96,131.05) and (176.47,135.51) .. (170.95,135.5) -- cycle ;
\draw   (255.24,137.58) .. controls (249.72,137.57) and (245.25,133.08) .. (245.27,127.56) .. controls (245.28,122.04) and (249.77,117.57) .. (255.29,117.58) .. controls (260.81,117.6) and (265.28,122.08) .. (265.26,127.61) .. controls (265.25,133.13) and (260.76,137.6) .. (255.24,137.58) -- cycle ;
\draw   (211.51,71.25) .. controls (205.99,71.24) and (201.52,66.75) .. (201.53,61.23) .. controls (201.55,55.7) and (206.03,51.24) .. (211.56,51.25) .. controls (217.08,51.26) and (221.55,55.75) .. (221.53,61.27) .. controls (221.52,66.8) and (217.03,71.26) .. (211.51,71.25) -- cycle ;
\draw    (216.5,182.5) -- (249.9,139.45) ;
\draw [shift={(251.74,137.08)}, rotate = 127.81] [fill={rgb, 255:red, 0; green, 0; blue, 0 }  ][line width=0.08]  [draw opacity=0] (8.93,-4.29) -- (0,0) -- (8.93,4.29) -- cycle    ;
\draw    (204,182.5) -- (175.6,138.36) ;
\draw [shift={(173.98,135.84)}, rotate = 57.24] [fill={rgb, 255:red, 0; green, 0; blue, 0 }  ][line width=0.08]  [draw opacity=0] (8.93,-4.29) -- (0,0) -- (8.93,4.29) -- cycle    ;
\draw    (171,115.5) -- (201.86,69.76) ;
\draw [shift={(203.53,67.27)}, rotate = 124] [fill={rgb, 255:red, 0; green, 0; blue, 0 }  ][line width=0.08]  [draw opacity=0] (8.93,-4.29) -- (0,0) -- (8.93,4.29) -- cycle    ;

\draw [color={rgb, 255:red, 139; green, 87; blue, 42 }  ,draw opacity=1 ][line width=3]    (180.98,318.52) -- (237.5,318.5) ;
\draw [shift={(243.5,318.5)}, rotate = 179.98] [fill={rgb, 255:red, 139; green, 87; blue, 42 }  ,fill opacity=1 ][line width=0.08]  [draw opacity=0] (16.97,-8.15) -- (0,0) -- (16.97,8.15) -- cycle    ;
\draw   (209.95,393.87) .. controls (204.42,393.85) and (199.96,389.36) .. (199.97,383.84) .. controls (199.98,378.32) and (204.47,373.85) .. (210,373.87) .. controls (215.52,373.88) and (219.98,378.37) .. (219.97,383.89) .. controls (219.96,389.41) and (215.47,393.88) .. (209.95,393.87) -- cycle ;
\draw   (169.95,328.5) .. controls (164.43,328.49) and (159.96,324) .. (159.98,318.48) .. controls (159.99,312.95) and (164.48,308.49) .. (170,308.5) .. controls (175.52,308.51) and (179.99,313) .. (179.98,318.52) .. controls (179.96,324.05) and (175.47,328.51) .. (169.95,328.5) -- cycle ;
\draw   (254.24,330.58) .. controls (248.72,330.57) and (244.25,326.08) .. (244.27,320.56) .. controls (244.28,315.04) and (248.77,310.57) .. (254.29,310.58) .. controls (259.81,310.6) and (264.28,315.08) .. (264.26,320.61) .. controls (264.25,326.13) and (259.76,330.6) .. (254.24,330.58) -- cycle ;
\draw   (210.51,264.25) .. controls (204.99,264.24) and (200.52,259.75) .. (200.53,254.23) .. controls (200.55,248.7) and (205.03,244.24) .. (210.56,244.25) .. controls (216.08,244.26) and (220.55,248.75) .. (220.53,254.27) .. controls (220.52,259.8) and (216.03,264.26) .. (210.51,264.25) -- cycle ;
\draw    (215.5,375.5) -- (248.9,332.45) ;
\draw [shift={(250.74,330.08)}, rotate = 127.81] [fill={rgb, 255:red, 0; green, 0; blue, 0 }  ][line width=0.08]  [draw opacity=0] (8.93,-4.29) -- (0,0) -- (8.93,4.29) -- cycle    ;
\draw    (203,375.5) -- (174.6,331.36) ;
\draw [shift={(172.98,328.84)}, rotate = 57.24] [fill={rgb, 255:red, 0; green, 0; blue, 0 }  ][line width=0.08]  [draw opacity=0] (8.93,-4.29) -- (0,0) -- (8.93,4.29) -- cycle    ;
\draw    (170,308.5) -- (200.86,262.76) ;
\draw [shift={(202.53,260.27)}, rotate = 124] [fill={rgb, 255:red, 0; green, 0; blue, 0 }  ][line width=0.08]  [draw opacity=0] (8.93,-4.29) -- (0,0) -- (8.93,4.29) -- cycle    ;

\draw [color={rgb, 255:red, 139; green, 87; blue, 42 }  ,draw opacity=1 ][line width=3]    (203,375.5) -- (176.22,333.89) ;
\draw [shift={(172.98,328.84)}, rotate = 57.24] [fill={rgb, 255:red, 139; green, 87; blue, 42 }  ,fill opacity=1 ][line width=0.08]  [draw opacity=0] (16.97,-8.15) -- (0,0) -- (16.97,8.15) -- cycle    ;
\draw [color={rgb, 255:red, 139; green, 87; blue, 42 }  ,draw opacity=1 ][line width=3]    (250,310.5) -- (220.84,266.98) ;
\draw [shift={(217.5,262)}, rotate = 56.17] [fill={rgb, 255:red, 139; green, 87; blue, 42 }  ,fill opacity=1 ][line width=0.08]  [draw opacity=0] (16.97,-8.15) -- (0,0) -- (16.97,8.15) -- cycle    ;

\draw [color={rgb, 255:red, 139; green, 87; blue, 42 }  ,draw opacity=1 ][line width=3]    (462.27,319.06) -- (405.5,319.46) ;
\draw [shift={(399.5,319.5)}, rotate = 359.6] [fill={rgb, 255:red, 139; green, 87; blue, 42 }  ,fill opacity=1 ][line width=0.08]  [draw opacity=0] (16.97,-8.15) -- (0,0) -- (16.97,8.15) -- cycle    ;
\draw   (428.95,393.37) .. controls (423.42,393.35) and (418.96,388.86) .. (418.97,383.34) .. controls (418.98,377.82) and (423.47,373.35) .. (429,373.37) .. controls (434.52,373.38) and (438.98,377.87) .. (438.97,383.39) .. controls (438.96,388.91) and (434.47,393.38) .. (428.95,393.37) -- cycle ;
\draw   (388.95,328) .. controls (383.43,327.99) and (378.96,323.5) .. (378.98,317.98) .. controls (378.99,312.45) and (383.48,307.99) .. (389,308) .. controls (394.52,308.01) and (398.99,312.5) .. (398.98,318.02) .. controls (398.96,323.55) and (394.47,328.01) .. (388.95,328) -- cycle ;
\draw   (473.24,330.08) .. controls (467.72,330.07) and (463.25,325.58) .. (463.27,320.06) .. controls (463.28,314.54) and (467.77,310.07) .. (473.29,310.08) .. controls (478.81,310.1) and (483.28,314.58) .. (483.26,320.11) .. controls (483.25,325.63) and (478.76,330.1) .. (473.24,330.08) -- cycle ;
\draw   (429.51,263.75) .. controls (423.99,263.74) and (419.52,259.25) .. (419.53,253.73) .. controls (419.55,248.2) and (424.03,243.74) .. (429.56,243.75) .. controls (435.08,243.76) and (439.55,248.25) .. (439.53,253.77) .. controls (439.52,259.3) and (435.03,263.76) .. (429.51,263.75) -- cycle ;
\draw    (434.5,375) -- (467.9,331.95) ;
\draw [shift={(469.74,329.58)}, rotate = 127.81] [fill={rgb, 255:red, 0; green, 0; blue, 0 }  ][line width=0.08]  [draw opacity=0] (8.93,-4.29) -- (0,0) -- (8.93,4.29) -- cycle    ;
\draw    (422,375) -- (393.6,330.86) ;
\draw [shift={(391.98,328.34)}, rotate = 57.24] [fill={rgb, 255:red, 0; green, 0; blue, 0 }  ][line width=0.08]  [draw opacity=0] (8.93,-4.29) -- (0,0) -- (8.93,4.29) -- cycle    ;
\draw    (389,308) -- (419.86,262.26) ;
\draw [shift={(421.53,259.77)}, rotate = 124] [fill={rgb, 255:red, 0; green, 0; blue, 0 }  ][line width=0.08]  [draw opacity=0] (8.93,-4.29) -- (0,0) -- (8.93,4.29) -- cycle    ;

\draw [color={rgb, 255:red, 139; green, 87; blue, 42 }  ,draw opacity=1 ][line width=3]    (434.5,375) -- (466.06,334.32) ;
\draw [shift={(469.74,329.58)}, rotate = 127.81] [fill={rgb, 255:red, 139; green, 87; blue, 42 }  ,fill opacity=1 ][line width=0.08]  [draw opacity=0] (16.97,-8.15) -- (0,0) -- (16.97,8.15) -- cycle    ;
\draw [color={rgb, 255:red, 139; green, 87; blue, 42 }  ,draw opacity=1 ][line width=3]    (389,308) -- (418.18,264.75) ;
\draw [shift={(421.53,259.77)}, rotate = 124] [fill={rgb, 255:red, 139; green, 87; blue, 42 }  ,fill opacity=1 ][line width=0.08]  [draw opacity=0] (16.97,-8.15) -- (0,0) -- (16.97,8.15) -- cycle    ;

\draw (207.4,52.6) node [anchor=north west][inner sep=0.75pt]  [font=\small]  {$4$};
\draw (166.5,118.6) node [anchor=north west][inner sep=0.75pt]  [font=\small]  {$2$};
\draw (251.1,118.5) node [anchor=north west][inner sep=0.75pt]  [font=\small]  {$3$};
\draw (205.9,182.9) node [anchor=north west][inner sep=0.75pt]  [font=\small]  {$1$};
\draw (204.9,375.9) node [anchor=north west][inner sep=0.75pt]  [font=\small]  {$1$};
\draw (250.1,311.5) node [anchor=north west][inner sep=0.75pt]  [font=\small]  {$3$};
\draw (165.5,311.6) node [anchor=north west][inner sep=0.75pt]  [font=\small]  {$2$};
\draw (206.4,245.6) node [anchor=north west][inner sep=0.75pt]  [font=\small]  {$4$};
\draw (423.9,375.4) node [anchor=north west][inner sep=0.75pt]  [font=\small]  {$1$};
\draw (469.1,311) node [anchor=north west][inner sep=0.75pt]  [font=\small]  {$3$};
\draw (384.5,311.1) node [anchor=north west][inner sep=0.75pt]  [font=\small]  {$2$};
\draw (425.4,245.1) node [anchor=north west][inner sep=0.75pt]  [font=\small]  {$4$};
\draw (370,78) node [anchor=north west][inner sep=0.75pt]   [align=left] {\LARGE $\begin{bmatrix}
1 & \bm{\underline 1} & \bm{\underline 1} & \bm{\underline 1}\\
0 & 1 & \bm{\underline *} & \bm{\underline 1}\\
0 & * & 1 & \bm{\underline *}\\
0 & 0 & * & 1\\
\end{bmatrix}$};

\end{tikzpicture}
}
    \caption{\small\justifying The top-left graph represents the cover graph of a poset $\cP = (S, \preceq)$ over $S = \{1,2,3,4\}$, and poset relation $\preceq = \{(1,2), (1,3), (2,4), (1,4)\}$. The bottom row shows two possible linear-extensions $1 \preceq 2 \preceq 3 \preceq 4$ and $1 \preceq 3 \preceq 2 \preceq 4$, and the corresponding cover graphs, drawn in red arrows. The matrix on the top-right corresponds to $\cM_\cP$. Unrolling of the upper triangle (bold-underline) of $\cM_\cP$ gives $x_{\cM_\cP} = 111*1*$. Fixing the 4th bit of $x_{\cM_\cP}$ to $0$ is equivalent to including the relation $(3,2)$ into $\preceq$. Here $\mathsf{SubCond}(\preceq, 4) = \preceq \cup \{(3,2)\}$.}
    \label{fig:lext}
\end{figure}
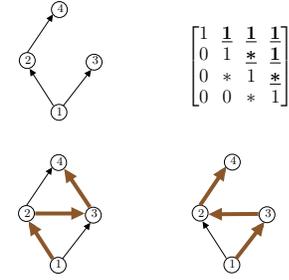

\paragraph{Basic Probability Facts}

We will use the following probability notations in our algorithm. A random variable $X$ is said to follow the exponential distribution with parameter $\lambda$ if $\Pr(X=x)=\lambda e^{-\lambda x}$ if $x \geq 0$ and $0$ otherwise. This is represented as $X \sim \mathsf{Exp}(\lambda)$. A random variable $X$ is said to be \emph{sub-Gaussian} (SubG in short) with parameter $\alpha^2$ if and only if its tails are dominated by a Gaussian of parameter $\alpha^2$. We include formal definitions and related concentration bounds in the supplementary material.

\section{Our Results}
\label{sec:ourresult}

\noindent The main technical contribution of this work is the algorithm \gst that can estimate the variation distance between a known and an unknown self-reducible sampler. The following informal theorem captures the details.

\begin{restatable}{theo}{cubeprobe}
\label{thm:main}
For an error parameter $\zeta \in (0,1)$, and a constant $\delta<1/3$, \gst is  $(\zeta, \delta$)-approx $\dtv$ estimator between a known and unknown self-reducible samplers $\iws$ and $\igs$ respectively with sample complexity of $\widetilde{\mathcal{O}}\left(n^2/\zeta^4\right)$.
\end{restatable}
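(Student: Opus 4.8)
The plan is to strip away the sampler language, reducing to a statement about an oracle distribution versus an explicit one over $\{0,1\}^n$, and then to estimate the total variation distance through a reformulation with a bounded integrand that can be evaluated from subcube‑conditional samples.

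Write $P := \cD^{\igs,\psi}$ and $Q := \cD^{\iws,\psi}$, so $\dtvp(\igs,\iws)=\dtv(P,Q)$. Since $\iws$ is a known sampler, $Q$ is explicit and every conditional marginal $Q_i(b\mid x_{<i})=\Pr_{y\sim Q}[y_i=b\mid y_{<i}=x_{<i}]$ is computable exactly. Since $\igs$ is self‑reducible, for any prefix $x_{<i}$ the definition supplies an input $\widehat\psi$ whose output distribution is $P$ conditioned on $y_{<i}=x_{<i}$; thus running $\igs(\widehat\psi)$ realizes a prefix‑conditional sample of $P$. Our estimator only ever conditions on prefixes, so self‑reducibility is exactly what is needed to simulate the required access to $P$. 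It therefore suffices to build an algorithm that, given prefix‑conditional sampling access to an unknown $P$ and the full description of $Q$, outputs $\widehat{\mathsf{dist}}$ with $|\widehat{\mathsf{dist}}-\dtv(P,Q)|\le\zeta$ with probability $\ge 1-\delta$ using $\widetilde{\mathcal O}(n^2/\zeta^4)$ samples; the tester $\gstt$ then follows by the testing‑via‑learning reduction with $\zeta=\Theta(\eta-\eps)$.

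For the estimator $\gst$ I would use the identity
\[
\dtv(P,Q)=\E_{x\sim P}\!\left[\Big(1-\tfrac{Q(x)}{P(x)}\Big)^{+}\right],
\]
whose integrand lies in $[0,1]$, so an empirical average over $T=\Theta(\zeta^{-2}\log\delta^{-1})$ samples $x\sim P$ approximates the right‑hand side within $\zeta/3$ by Hoeffding. The only quantity not directly available is the likelihood ratio $R(x)=Q(x)/P(x)=\prod_{i=1}^{n}\frac{Q_i(x_i\mid x_{<i})}{P_i(x_i\mid x_{<i})}$. For a drawn $x$ the numerators are exact; if some numerator is $0$ then $R(x)=0$ and the contribution is exactly $1$; otherwise each denominator $p_i:=P_i(x_i\mid x_{<i})$, which is positive since $x\sim P$, is estimated by drawing $m_i$ prefix‑conditional samples of $P$ on the subcube fixing $x_{<i}$ and taking the empirical frequency of bit $x_i$. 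Multiplying the exact ratios by the reciprocals of these estimates, halting the running product once it provably exceeds $1$ (where $(1-\cdot)^{+}$ vanishes), yields the per‑sample estimate; a median over $\Theta(\log\delta^{-1})$ independent runs of the whole procedure upgrades the confidence to $1-\delta$.

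Two error sources must be controlled. First, tiny conditional probabilities: since $x\sim P$ we have $\Pr_{x\sim P}[p_i<\gamma]<\gamma$ for each $i$, hence $\Pr_{x\sim P}[\min_i p_i<\gamma]<n\gamma$; choosing $\gamma=\Theta(\zeta/n)$, the $x$'s possessing a tiny branch contribute at most $\zeta/3$ in expectation and may be discarded. Second, the multiplicative error of $R$ on a surviving $x$: writing $\widehat p_i=p_i(1+\varepsilon_i)$, the error of $\log\widehat R$ is $\sum_i\log(1+\varepsilon_i)$, a sum of $n$ independent, mean‑near‑zero, bounded terms, hence sub‑Gaussian; by the concentration bounds from the preliminaries it stays $O(\zeta)$ with high probability once the per‑coordinate sample sizes make its variance $O(\zeta^2)$, and since $p_i=\Omega(\zeta/n)$ on surviving branches and the product needs tracking only up to the threshold $1$, a suitable allocation of the $m_i$'s keeps the per‑sample inner cost $\widetilde{\mathcal O}(n^2/\zeta^2)$; together with the $\widetilde{\mathcal O}(1/\zeta^2)$ outer samples this gives $\widetilde{\mathcal O}(n^2/\zeta^4)$. \emph{The heart of the argument, and the main obstacle, is exactly this accounting:} $R(x)$ is a product of $n$ conditional marginals that can be polynomially small, so a worst‑case coordinate‑by‑coordinate relative‑error bound loses extra factors of $n$, and one must instead exploit (i) the rarity of small conditional probabilities under $x\sim P$ together with (ii) a sub‑Gaussian/martingale‑style bound on how the $n$ independent per‑level estimation errors aggregate, to pin the exponent of $n$ at $2$. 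A secondary point to verify is that self‑reducibility yields exact, rejection‑free prefix‑conditional samples with efficiently constructible modified inputs $\widehat\psi$ (as it does for the poset‑to‑hypercube encoding).
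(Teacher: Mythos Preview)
Your skeleton matches the paper's proof almost exactly: the identity $\dtv(P,Q)=\E_{x\sim P}\bigl[(1-Q(x)/P(x))^{+}\bigr]$, an outer Hoeffding average over $\alpha=\Theta(\zeta^{-2}\log\delta^{-1})$ samples from $P$, the chain‑rule factorisation $P(x)=\prod_i P_i(x_i\mid x_{<i})$, and a sub‑Gaussian aggregation of the $n$ per‑coordinate relative errors so that estimating each marginal to accuracy $\gamma/\sqrt{n}$ yields overall accuracy $\gamma$. All of that is what the paper does.

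The divergence, and the gap, is in the inner step. You propose to discard any $x$ with some $p_i<\gamma=\Theta(\zeta/n)$ and then estimate the surviving $p_i$'s by a fixed‑budget empirical frequency, appealing to an unspecified ``suitable allocation of the $m_i$'s'' for the $\widetilde{\mathcal O}(n^2/\zeta^2)$ inner cost. But with a fixed budget, guaranteeing multiplicative error $\zeta/\sqrt{n}$ at the threshold $p_i\approx\zeta/n$ already costs $\Theta\!\bigl(n/(p_i\,\zeta^2/n)\bigr)=\Theta(n^2/\zeta^3)$ samples \emph{per coordinate}, hence $\Theta(n^3/\zeta^3)$ per outer sample and $\widetilde{\mathcal O}(n^3/\zeta^5)$ overall---one factor of $n/\zeta$ too many. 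Your rarity bound $\Pr_{x\sim P}[\min_i p_i<\gamma]<n\gamma$ is correct, but it only controls the bias from discarding; it does nothing for the cost on the surviving $x$'s.

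The paper closes this gap by making the per‑coordinate budget \emph{adaptive}: it runs \gbas (sample until $k$ successes), which by Huber's theorem returns a $(1\pm\gamma/\sqrt n)$‑estimate of $p_i$ with $k=\Theta\bigl((n/\gamma^2)\log(n/\delta')\bigr)$ and \emph{expected} draw count $k/p_i$, independent of how small $p_i$ is. The accounting identity you are missing is that, since $x\sim P$, the bit $x_i$ is itself drawn from $P_i(\cdot\mid x_{<i})$, so
\[
\E_{x}\!\left[\tfrac{1}{p_i}\right]\;=\;\sum_{b\in\{0,1\}} P_i(b\mid x_{<i})\cdot\frac{1}{P_i(b\mid x_{<i})}\;=\;2,
\]
whence the expected inner cost is exactly $2nk=\widetilde{\mathcal O}(n^2/\zeta^2)$ with no thresholding at all. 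This expectation trick (adaptive cost $k/p_i$ plus $\E_x[1/p_i]=2$) is what pins the exponent of $n$ at $2$; replacing your fixed‑budget empirical frequency by an inverse‑binomial/\gbas style estimator and dropping the threshold makes your argument go through at the stated complexity.
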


\noindent Our framework seamlessly extends to yield an $(\eps, \eta, \delta)$-tester \gstt through the ``testing-via-learning'' paradigm \cite{diakonikolas2007testing,DBLP:conf/icalp/GopalanOSSW09,servedio2010testing}. To test whether the sampler's output distribution is $\eps$-close or $\eta$-far from the target output distribution, the resultant tester requires  $\widetilde{\mathcal{O}}\left(n^2/\left(\eta-\eps\right)^4\right)$ samples.

\noindent To demonstrate the usefulness of \gst, we developed a prototype implementation with experimental evaluations in gauging the correctness of linear-extension samplers while emulating uniform samplers. Counting the size of the set of linear extensions and sampling from them has been widely studied in a series of works by~\cite{huber2014near, talvitie2018scalable, talvitie2018counting}. The problem found extensive applications in artificial intelligence, particularly in learning graphical models \cite{wallace1996causal}, in sorting \cite{peczarski2004new}, sequence analysis \cite{mannila2000global}, convex rank tests \cite{morton2009convex}, preference reasoning \cite{lukasiewicz2014probabilistic}, partial order plans \cite{muise2016optimal} etc. Our implementation extends to a closeness tester that accepts ``close to uniform''  samplers and rejects ``far from uniform'' samplers. Moreover, while rejecting, our implementation can produce a certificate of non-uniformity. \gst and \gstt are the first estimator and tester for general self-reducible samplers.

\paragraph{Novelty in Our Contributions} In relation to the previous works, we emphasize our two crucial novel contributions.
\begin{itemize}
    \item Our algorithm is grounded in a notably refined form of ``grey-box'' sampling methodology, setting it apart from prior research endeavors \cite{chakraborty2019testing,meel2020testing,pmlr-v206-banerjee23a}. While prior approaches required arbitrary conditioning, our algorithm builds on the significantly weaker subcube conditional sampling paradigm~\cite{bhattacharyya2018property}. Subcube conditioning is a natural fit for ubiquitous self-reducible sampling, and thus our algorithm accommodates a considerably broader spectrum of sampling scenarios.

    \item All previous works produced testers crafted to produce a ``yes'' or ``no'' answer to ascertain correctness of samplers. In essence, these testers strive to endorse samplers that exhibit ``good'' behavior while identifying and rejecting those that deviate significantly from this standard. However, inherent technical ambiguity exists in setting the thresholds of the distances ($\eta$ and $\eps$) that would label a sampler as good or bad. In contrast, the \gst framework produces the estimated statistical distance that allows a practitioner to make informed and precise choices while selecting a sampler implementation.  In this context \gst is the first of its kind.
\end{itemize}

\paragraph{Our Contribution in the Context of Distribution Testing with Subcube Conditional Samples.}

The crucial component in designing our self-reducible-sampler-tester \gst is a novel algorithm for estimating the variation distance in the subcube conditioning model in distribution testing. Given sampling access to an unknown distribution $P$ and a known distribution $Q$ over $\{0,1\}^n$, the distance estimation problem asks to estimate the variation distance between $P$ and $Q$. The corresponding testing problem is the \emph{tolerant identity testing} of $P$ and $Q$. Distance estimation and tolerant testing with subcube conditional samples have been open since the introduction of the framework five years ago. The following theorem formalizes our result in the context of distance estimation/tolerant testing using subcube conditional samples. 
\begin{theo}
Let $P$ be an unknown distribution and $Q$ be a known distribution defined over $\{0,1\}^n$. Given subcube conditioning access to $P$, an approximation parameter $\gamma \in (0,1)$ and a confidence parameter $\delta \in (0,1)$, there exists an algorithm that takes $\widetilde{\Oh}(n^2)$ subcube-conditional samples from $P$ on expectation and outputs an estimate of $\dtv(P,Q)$ with an additive error $\zeta$ with probability at least $1-\delta$.
\end{theo}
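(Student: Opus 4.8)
The plan is to reduce distance estimation to estimating, at a point drawn from $P$, the likelihood ratio of $Q$ against $P$, via the exact identity
$$\dtv(P,Q)=\sum_{x:\,P(x)\ge Q(x)}\bigl(P(x)-Q(x)\bigr)=\E_{x\sim P}\!\left[\Bigl(1-\tfrac{Q(x)}{P(x)}\Bigr)_{\!+}\right],\qquad (t)_+:=\max(t,0).$$
For any fixed $x$ the value $Q(x)$ is computable since $Q$ is known, whereas $P(x)$ --- possibly exponentially small --- is not; but subcube conditioning gives coordinate-wise access to $P$. Fixing an order on $[n]$ and writing $\rho_i(x):=P(X_i=x_i\mid X_{<i}=x_{<i})$, we have $Q(x)/P(x)=Q(x)/\prod_{i=1}^n\rho_i(x)$, and conditioning $P$ on the subcube $\{y:y_{<i}=x_{<i}\}$ and reading coordinate $i$ is a Bernoulli trial with success probability exactly $\rho_i(x)$. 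Thus each $\rho_i(x)$ is estimable to small relative error by a Bernoulli-mean estimator; I would use the \gbas estimator, which, run until it accrues $k$ successes, returns a near-unbiased estimate whose logarithm concentrates around $\log\rho_i(x)$ with variance $O(1/k)$ and sub-Gaussian-type tails, while spending an expected $k/\rho_i(x)$ trials.

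The algorithm \gst then proceeds as follows: (1) draw $N=\widetilde{\Oh}(1/\zeta^2)$ i.i.d.\ samples $x^{(1)},\dots,x^{(N)}\sim P$ (subcube conditioning on the full cube) and evaluate each $Q(x^{(j)})$; (2) for every $j$ with $Q(x^{(j)})>0$ and every coordinate $i$, run \gbas with stopping count $k=\widetilde{\Theta}(n/\zeta^2)$ to obtain $\widehat\rho^{(j)}_i$, and set $\widehat R^{(j)}:=Q(x^{(j)})/\prod_{i=1}^n\widehat\rho^{(j)}_i$, while $\widehat R^{(j)}:=0$ when $Q(x^{(j)})=0$; (3) output $\tfrac1N\sum_{j=1}^N\bigl(1-\widehat R^{(j)}\bigr)_+$. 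For accuracy, given $x^{(j)}$ the difference $\log\widehat R^{(j)}-\log R^{(j)}=\sum_{i=1}^n\bigl(\log\rho_i(x^{(j)})-\log\widehat\rho^{(j)}_i\bigr)$ is a sum of $n$ independent sub-Gaussian terms of variance $O(1/k)$, so $k=\widetilde{\Theta}(n/\zeta^2)$ together with a union bound over $j$ gives $\widehat R^{(j)}=(1\pm O(\zeta))\,R^{(j)}$ for all $j$ simultaneously with probability $\ge 1-\delta/2$, where $R^{(j)}:=Q(x^{(j)})/P(x^{(j)})$. Splitting into the cases $R^{(j)}\le 1$, $R^{(j)}\in(1,1+O(\zeta))$, and $R^{(j)}\ge 1+O(\zeta)$ then yields $\bigl|(1-\widehat R^{(j)})_+-(1-R^{(j)})_+\bigr|=O(\zeta)$ in each case, because the $(\cdot)_+$ truncation annihilates the large-ratio regime in which the multiplicative error is unbounded. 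Since $(1-R^{(j)})_+\in[0,1]$, Hoeffding over the $N$ i.i.d.\ summands places $\tfrac1N\sum_j(1-R^{(j)})_+$ within $O(\zeta)$ of $\E_{x\sim P}[(1-R(x))_+]=\dtv(P,Q)$ with probability $\ge 1-\delta/2$, and rescaling $\zeta$ completes the correctness argument.

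For the sample complexity, the expected number of subcube samples drawn at coordinate $i$ of one outer sample is $k\cdot\E_{x\sim P}[1/\rho_i(x)]$, and the crucial amortization
$$\E_{x\sim P}\!\left[\sum_{i=1}^n\frac{1}{\rho_i(x)}\right]=\sum_{i=1}^n\E_{x_{<i}\sim P}\Bigl[\,\bigl|\{b\in\{0,1\}:\rho_i(b,x_{<i})>0\}\bigr|\,\Bigr]\le 2n,$$
bounds the grand total by $N\cdot k\cdot O(n)=\widetilde{\Oh}(n^2/\zeta^4)$ in expectation, i.e.\ $\widetilde{\Oh}(n^2)$ for constant $\zeta$. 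The step I expect to be the main obstacle is reconciling this budget with the error amplification of the $n$-fold product: imposing a per-factor relative error $\Theta(\zeta/n)$ through independent confidence intervals would already cost $\widetilde\Theta(n^2/\zeta^2)$ trials on a coordinate with $\rho_i\approx 1$, and would scale as $1/\rho_i$ in general, blowing the total up to $\widetilde\Theta(n^3/\zeta^4)$. The two ingredients that bring it back to $\widetilde{\Oh}(n^2)$ are (i) treating the product globally through sub-Gaussian concentration of $\log\widehat R^{(j)}$, so that the $n$ per-coordinate variances of size $O(1/k)$ add and only $k=\widetilde\Theta(n/\zeta^2)$ is required, and (ii) the amortization above, which shows that along a random $P$-path the quantities $1/\rho_i(x)$ sum to $O(n)$ in expectation --- so the adaptive, heavy-tailed running time of the \gbas subroutine is controlled only in expectation, which is precisely why the theorem states the complexity bound on expectation. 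A last minor point is accounting for the cost incurred on the low-probability failure events (for instance by truncating each \gbas invocation at a generous cap with negligible failure probability) so that the stated bound is a genuine expectation.
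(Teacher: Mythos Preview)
Your proposal is correct and follows essentially the same approach as the paper: both use the identity $\dtv(P,Q)=\E_{x\sim P}[(1-Q(x)/P(x))_+]$, estimate $P(x)$ via the chain rule by running \gbas on each conditional marginal with stopping count $k=\widetilde{\Theta}(n/\zeta^2)$, invoke sub-Gaussian concentration across the $n$ coordinates to obtain a $(1\pm O(\zeta))$-multiplicative estimate of $P(x)$, and use the same amortization $\E_{x\sim P}\bigl[\sum_i 1/\rho_i(x)\bigr]\le 2n$ for the expected sample count. The only cosmetic differences are that the paper tracks the relative errors $err_i=|P_i/\widehat{P}_i-1|$ directly as sub-Gaussians rather than working with $\log\widehat\rho_i-\log\rho_i$, and controls the bias via the expectation bound $\E_{x\sim P}\bigl|Q(x)/P(x)-Q(x)/\widehat P(x)\bigr|\le\gamma/(1-\gamma)$ followed by a six-case ordering analysis, rather than your pointwise three-case split on the size of $R^{(j)}$.
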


\noindent This is the first algorithm that solves the variation distance estimation problem in $\widetilde{\Oh}(n^2)$ subcube conditioning samples.

\subsection{Related Works}
\noindent The state-of-the-art approach for efficiently testing CNF samplers was initiated by Meel and Chakraborty~\cite{chakraborty2019testing}. They employed the concept of hypothesis testing with conditional samples~\cite{chakraborty2016power,canonne2015testing} and showed that such samples could be ``simulated'' in the case of CNF samplers. The approach produced mathematical guarantees on the correctness of their tester. Their idea was extended to design a series of testers for various types of CNF samplers (\bk~\cite{chakraborty2019testing} for uniform CNF samplers, \wbk~\cite{meel2020testing} for weighted CNF samplers, \teq~\cite{PM21} for testing probabilistic circuits, \wfl~\cite{pmlr-v206-banerjee23a} for Horn samplers, \wbkk~\cite{PM22} for constrained samplers).

The theoretical foundation of our work follows the \emph{subcube conditioning} model of property testing of probability distributions. This model was introduced by  \cite{bhattacharyya2018property} as a special case of the conditional sampling model \cite{chakraborty2016power,canonne2015testing} targeted towards high-dimensional distributions. Almost all the known results in the subcube conditioning framework deal with problems in the non-tolerant regime: testing uniformity, identity, and equivalence of distributions. \cite{canonne2021random} presented optimal algorithm for (non-tolerant) uniformity testing in this model. \cite{chen2021learning} studied the problem of learning and testing junta distributions. Recently \cite{mahajan2023learning} studied the problem of learning Hidden Markov models. \cite{blanca2023complexity} studied identity testing in related coordinate conditional sampling model. \cite{fotakis2020efficient} studied parameter estimation problem for truncated Boolean product distributions. Recently \cite{chen2023uniformity} studied the problem of uniformity testing in hypergrids. Very recently, in a concurrent work, the authors in \cite{kumar2023tolerant} studied the problem of tolerant equivalence testing where both the samplers are unknown and designed an algorithm that takes $\widetilde{\Oh}(n^3)$ samples.

\section{Estimator of Self-reducible Samplers}
\label{sec:test-graph-sampl}

\noindent Our estimator utilizes the subcube conditional sampling technique. The main program \gst works with two subroutines: $\estimate$ and $\gbas$. The algorithm $\gbas$ is adopted from the \emph{Gamma Bernoulli Approximation Scheme} \cite{huber2014near}. Since its intricacies are crucial for our algorithm, we include the algorithm here for completeness.

\begin{algorithm}[ht]
\caption{\gst($\igs, \iws, \psi, \zeta, \delta$)}\label{balg:indentl1}

$\alpha =\frac{2}{\zeta^2} \log \frac{4}{\delta}$ \label{line:toltester:line1} \

$\gamma = \frac{\zeta}{1.11(2+ \zeta)}$ \label{line:toltester:line2} \

$\delta' = \delta/2\alpha$ \label{line:toltester:line3} \

$S= \emptyset$ \label{line:toltester:line6}

$S \gets \alpha \ \mbox{iid samples from} \ \igs(\psi)$ \label{line:toltester:line8} \

$val=0$ \label{line:toltester:line9}\

\For{$x \in S$}{\label{line:toltester:forloop:line10}

$val = val+ \max\left(0, 1 - \frac{\cD^{\iws,\psi}(x)}{\estimate(\igs, \psi, n, x, \gamma, \delta')} \right)$ \label{line:toltester:line11}\

}

\Return $\frac{val}{\alpha}$\label{line:toltester:line12}
                           
\end{algorithm}

\begin{algorithm}[ht]
\caption{\estimate($\igs, \psi, n, x, \gamma, \delta'$)}\label{balg:subestimate}

{ $k \gets \lceil \frac{3n}{\gamma^2} \cdot \log\left(\frac{2n}{\delta'}\right) \rceil$ \
} 
\

\For{$i=1 \ to \ n$}{\label{line:estimate:forloop:line1}

$\widehat{\psi} \gets \text{\subcond}(\psi, x_{1} ... x_{i-1} )$

$\widehat{P}_{i} \gets \gbas (\igs, \widehat{\psi}, i, k, x_i)$ \label{line:estimate:line2} \

}

$\widehat{\cD^{\igs, \psi}_x} = \Pi_{i=1}^n \widehat{P_{i}}$ \label{line:estimate:line3} \ 

\Return $\widehat{\cD^{\igs, \psi}_x}$ \label{line:estimate:line4} \

\end{algorithm}

\begin{algorithm}[ht]
    \caption{\gbas($\igs$, $\widehat{\psi}$, $i$, $k$, $\mathtt{HEAD}$)}
    \label{alg:gbas}
    $s \gets 0, \, r \gets 0$\;\label{line:gbas:line1}
    \While{$s < k$}{\label{line:gbas:whileloop:line2}
    $w \sim \igs(\widehat{\psi})$\;\label{line:gbas:whileloop:line3}
   \If{$\mathtt{HEAD} = w_i$}{
   \label{line:gbas:whileloop:line4}
        $s \gets s + 1$ \;
        \label{line:gbas:whileloop:line5}
    }
    $a \sim \mathsf{Exp}(1)$, $r \gets r + a$ \;
    \label{line:gbas:whileloop:line6}
    }
    $\widehat{p} \gets (k - 1)/r$ \;\label{line:gbas:line7}
    \Return $\widehat{p}$\; \label{line:gbas:line8}
\end{algorithm}

\vspace{.5em}

\noindent \bm{$\mathsf{CubeProbeEst}$}: In this algorithm, given a known self-reducible sampler $\iws$, subcube conditioning access to an unknown self-reducible sampler $\igs$, along with an input $\psi$, an approximation parameter $\zeta$ and a confidence parameter $\delta$, it estimates the variation distance between $\igs$ and $\iws$ with additive error $\zeta$. \gst uses the algorithm \estimate as a subroutine. It starts by setting several parameters $\alpha,\gamma, \delta'$ in \Cref{line:toltester:line1}-\Cref{line:toltester:line3}. In \Cref{line:toltester:line6}, it initializes an empty multi-set $S$, and then takes $\alpha$ samples from $\igs(\psi)$ in $S$ in \Cref{line:toltester:line8}. Now it defines a counter $val$ in \Cref{line:toltester:line9}, initialized to $0$. Now in the for loop starting from \Cref{line:toltester:forloop:line10}, for every sample $x \in S$ obtained before, \gst calls the subroutine \estimate in \Cref{line:toltester:line11} to estimate the probability mass of $\cD^{\iws, \psi}$ at $x$. Finally, in \Cref{line:toltester:line12}, we output $val/\alpha$ as the estimated variation distance and terminate the algorithm.

\vspace{.5em}

\noindent \bm{$\mathsf{Est}$}: Given subcube conditioning access to the unknown self-reducible sampler $\igs$, an input $\psi$, the dimension $n$, an $n$-bit string $x$, parameters $\gamma$ and $\delta'$ and an integer $t$, the subroutine \estimate returns an estimate of the probability of $\cD^{\igs, \psi}$ at $x$ by employing the subroutine \gbas. In the for loop starting from \Cref{line:estimate:forloop:line1}, it first calls \subcond with $\psi$ and $x_1, \ldots, x_{i-1}$ which outputs $\widehat{\psi}$. Now in \Cref{line:estimate:line2} it calls \gbas  with $\iws, \widehat{\psi}, i , k$  along with the $i$-th bit of $x$, i.e , $x_i$ with the integer $k$ (to be fixed such that $\delta'/n= 2\exp (-k \gamma^2/3)$) to estimate $\widehat{P_i}$, the empirical weight of $\cD^{\igs, \widehat{\psi}}$. Now in \Cref{line:estimate:line3}, \estimate computes the empirical weight of $\cD^{\igs, \psi}(x)$ by taking a product of all marginal distributions $\widehat{P}_1, \ldots, \widehat{P}_n$ obtained from the above for loop. Finally in \Cref{line:estimate:line4}, \estimate returns $\widehat{\cD_x^{\igs, \psi}}$, the estimated weight of the distribution $\cD^{\igs, \psi}$ on $x$.

\vspace{.5em} 

\noindent \bm{$\mathsf{GBAS}$}: In this algorithm, given access to an unknown self-reducible sampler $\igs$, input $\widehat{\psi}$, integers $i$ and $k$, and a bit $\mathtt{HEAD}$, \gbas outputs an estimate $\widehat{p}$ of $p$. \gbas starts by declaring two variables $s$ and $r$, initialized to $0$ in \Cref{line:gbas:line1}. Then in the for loop starting in \Cref{line:gbas:whileloop:line2}, as long as $s <k$, it first takes a sample $w$ from the sampler $\igs$ on input $\widehat{\psi}$ in \Cref{line:gbas:whileloop:line3}. Then in \Cref{line:gbas:whileloop:line4}, it checks if the value of $\mathtt{HEAD}$ is $w_i$ where $w_i$ is the $i$-th bit of the $n$-bit sample $w$. If the value of $\mathtt{HEAD}$ equals $w_i$, then in \Cref{line:gbas:whileloop:line5}, it increments the value of $s$ by 1. Then in \Cref{line:gbas:whileloop:line6}, \gbas samples  $a$ following $\mathsf{Exp(1)}$, the exponential distribution with parameter $1$ and  assigns $r+a$ to $r$. At the end of the for loop in \Cref{line:gbas:line7}, it assigns the estimated probability $\widehat{p}$ as $(k-1)/r$. Finally, in \Cref{line:gbas:line8}, \gbas returns the estimated probability $\widehat{p}$.

\subsection{Theoretical Analysis of Our Estimator}

\noindent The formal result of our estimator is presented below. 
\cubeprobe*

\noindent The formal proof is presented in the supplementary material.

\subsection{High-level Technical Overview}

\noindent The main idea of \gst stems from an equivalent characterization of the variation distance which states that $\dtvp(\igs, \iws) = \E_{x \sim \cD^{\igs,\psi}} (1-\cD^{\iws,\psi}(x)/\cD^{\igs,\psi}(x))$. Our goal is to estimate the ratio $\cD^{\iws,\psi}(x)/\cD^{\igs,\psi}(x)$ for some samples $x$-s drawn from $\cD^{\igs,\psi}$. As $\iws$ is known, it is sufficient to estimate $\cD^{\igs,\psi}(x)$. It is generally difficult to estimate $\cD^{\igs,\psi}(x)$. However, using self-reducibility of $\igs$ to mount subcube-conditioning access to $\cD^{\igs,\psi}$, we estimate $\cD^{\igs,\psi}(x)$ by conditioning over the $n$ conditional marginal distributions of $\cD^{\igs,\psi}$.  Using the chain formula, we obtain the value of $\cD^{\igs,\psi}(x)$ by multiplying a number of these conditional probabilities. This is achieved by the subroutine \estimate. The probability mass estimation of each conditional marginal distribution is achieved by the subroutine \gbas, which is called from \estimate. The idea of \gbas follows from \cite{rsa/Huber17}, which roughly states that to estimate the probability of head (say $p$) of a biased coin, within (multiplicative) error $\gamma_i$ and success probability at least $1-\delta$, it is sufficient to make $T$ coin tosses on average, where $T= k/p$ with $k \geq 3 \log(2/\delta)/\gamma_i^2$. The crucial parameter is the error margin $\gamma_i$ that is used in \estimate. It should be set so that after taking the errors in all the marginals into account, the total error remains bounded by the target error margin $\gamma$. Our pivotal observation is that the error distribution in the subroutine \gbas, when estimating the mass of the conditional marginal distributions, is a SubGaussian distribution (that is, a Gaussian distribution dominates its tails). Following the tail bound on the sum of SubGaussian random variables, we could afford to estimate the mass of each of the marginal with error $\gamma_i=\gamma/\sqrt{n}$ and still get an estimation of $\cD^{\igs,\psi}(x)$ with a correctness error of at most $\gamma$. That way the total sample complexity of \estimate reduces to $\widetilde{\Oh}(n /(\gamma/\sqrt{n})^2)=\widetilde{\Oh}(n^2/\gamma^2)$. As $\alpha/\gamma^2 = \mathcal{O}\left(1/\zeta^4\right)$,  we get the claimed sample complexity of \gst.

\color{black}

\section{From Estimator to Tester}
\label{sec:est2test}

\noindent We extend our design to a tester named \gstt that tests if two samplers are close or far in variation distance. 
As before, the inputs to \gstt are two self-reducible samplers $\igs,\iws$, an input $\psi$, parameters $\eps$, $\eta$, and the confidence parameter $\delta$. \gstt first computes the estimation margin-of-error $\zeta$ as ${(\eta-\eps)/2}$, and sets an intermediate confidence parameter $\delta_t$ as $2\delta$. The algorithm estimates the distance between $\igs$ and $\iws$ on input $\psi$, by invoking \gst on $\igs,\iws,\psi$ along with the estimation-margin $\zeta$ and $\delta_t$. If the computed distance $\widehat{\mathsf{dist}}$ is more than the threshold $K=(\eta + \eps)/{2}$, the tester rejects. Otherwise, the tester accepts.

\begin{algorithm}
\caption{\gstt($\igs, \iws, \psi, \eps, \eta, \delta$)}\label{balg:indentl1test}

$\zeta ={(\eta - \eps)}/{2}$ \label{line:tester:line1} \

$\delta_t = 2\delta$ \label{line:tester:line2} \

$K = {(\eta + \eps)}/{2}$
\label{line:tester:line3} \

$\widehat{\mathsf{dist}} = \gst(\igs, \iws, \psi, \zeta, \delta_t)$
\label{line:tester:line4} \

\If{$\widehat{\mathsf{dist}}  > K$}{ \label{line:tester:line5} \
\Return REJECT 
\label{line:tester:line6}  \

}

\Return ACCEPT 
\label{line:tester:line7} \
                           
\end{algorithm}

\noindent The details of \gstt are summarised below.
\begin{theo}
Consider an unknown self-reducible sampler $\igs$, a known self-reducible sampler $\iws$, an input $\psi$, closeness parameter $\eps \in (0,1)$, farness parameter $\eta \in (0,1)$ with $\eta >\eps$ and a confidence parameter $\delta \in (0,1)$. There exists a $(\eps,\eta, \delta)$-\whst \ \gstt that takes $\widetilde{\mathcal{O}}\left(n^2/\left(\eta-\eps\right)^4\right)$  samples.

\end{theo}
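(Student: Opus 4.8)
The plan is to reduce the correctness of \gstt\ entirely to the guarantee of \gst\ established in \Cref{thm:main}, using the standard ``testing-via-learning'' argument. First I would instantiate \Cref{thm:main} with the parameter choices made inside \gstt: set $\zeta = (\eta-\eps)/2$ and $\delta_t = 2\delta$. Since $\eta > \eps$ and both lie in $(0,1)$, we have $\zeta \in (0,1/2)$, so the precondition $\zeta \in (0,1/3]$ of \gst\ is not quite automatic --- I would either note that for the regime of interest $\eta - \eps \le 2/3$ (which can be assumed WLOG, since a larger gap only makes the problem easier and one can shrink $\eta$), or invoke \gst\ on a clipped value of $\zeta$. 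With that in place, \Cref{thm:main} gives that with probability at least $1 - \delta_t/2 = 1-\delta$ (here I am using that $\delta_t = 2\delta$ and the confidence in \Cref{thm:main} is $1-\delta_t$; one must track the factor-of-two bookkeeping carefully so the final failure probability is $\le \delta$, not $2\delta$), the returned value $\widehat{\mathsf{dist}}$ satisfies
\[
\dtvp(\igs,\iws) - \zeta \;\le\; \widehat{\mathsf{dist}} \;\le\; \dtvp(\igs,\iws) + \zeta .
\]

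Next I would do the two-case analysis against the threshold $K = (\eta+\eps)/2$. If $\igs$ is $\eps$-close to $\iws$ on $\psi$, i.e.\ $\dtvp(\igs,\iws) \le \eps$, then on the good event $\widehat{\mathsf{dist}} \le \eps + \zeta = \eps + (\eta-\eps)/2 = (\eta+\eps)/2 = K$, so the test does not reject and \gstt\ outputs \accept. Symmetrically, if $\igs$ is $\eta$-far, i.e.\ $\dtvp(\igs,\iws) \ge \eta$, then on the good event $\widehat{\mathsf{dist}} \ge \eta - \zeta = \eta - (\eta-\eps)/2 = (\eta+\eps)/2 = K$. Here one must be slightly careful about the boundary: the algorithm rejects only when $\widehat{\mathsf{dist}} > K$ strictly, so I would either argue the estimator's guarantee is strict on one side, or (cleaner) observe that replacing $\zeta$ by a value very slightly smaller than $(\eta-\eps)/2$ makes both inequalities strict while changing the sample complexity only by constants; alternatively one shows $\dtvp(\igs,\iws) \ge \eta$ forces $\widehat{\mathsf{dist}} \ge K$ and treats the measure-zero tie in favour of rejection by a strict-inequality version of \Cref{thm:main}. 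Either way, both required behaviours hold with probability at least $1-\delta$.

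Finally, for the sample complexity: every sample drawn by \gstt\ is drawn inside the single call $\gst(\igs,\iws,\psi,\zeta,\delta_t)$, so by \Cref{thm:main} the total is $\widetilde{\mathcal{O}}(n^2/\zeta^4) = \widetilde{\mathcal{O}}\!\left(n^2/(\eta-\eps)^4\right)$, where the $\widetilde{\mathcal{O}}$ absorbs the $\log(1/\delta_t) = \log(1/\delta) + O(1)$ factor coming from $\alpha$. This is exactly the claimed bound, completing the proof.

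I expect the only real subtlety --- ``the hard part'', such as it is --- to be the bookkeeping around confidence parameters and the strict-versus-nonstrict threshold comparison: making sure that $\delta_t = 2\delta$ together with the $1-\delta_t$ guarantee of \gst\ actually yields overall failure probability $\le \delta$ and not $2\delta$ (this hinges on what exactly the internal failure-probability budget of \gst\ is; if \gst\ as stated already succeeds with probability $1-\delta_t$, then passing $\delta_t = 2\delta$ is in the \emph{wrong} direction and one instead needs \gst\ called with confidence parameter $\delta$, so I would re-examine \Cref{thm:main}'s statement and, if necessary, treat the $\delta_t = 2\delta$ line as a typo for $\delta_t = \delta$ or reconcile it with a union bound over two one-sided events). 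Everything else is the routine triangle-style chain of inequalities above.
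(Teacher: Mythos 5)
Your plan is correct in outline and matches the algorithm's intent, but it is a genuinely different proof route from the paper's, and the difference matters exactly at the point you flagged as ``the hard part.'' You treat \gst as a black box: invoke \Cref{thm:main} with $\zeta=(\eta-\eps)/2$, then compare to $K=(\eta+\eps)/2$. As you yourself observe, with $\delta_t=2\delta$ a black-box invocation only yields confidence $1-2\delta$, so your route forces you either to call \gst with confidence $\delta$ (fine, and harmless for the $\widetilde{\Oh}$ bound) or to declare the line $\delta_t=2\delta$ a typo. The paper instead proves the tester \emph{white-box}: it conditions on the event $\cA$ that \estimate returns a $(1\pm\gamma)$-multiplicative estimate of $P(x)$ for all $\alpha$ samples (\Cref{bcl:estimatel1}), uses the expectation characterization of the distance (\Cref{lem:dtvexpl1}) together with the shift bound $\gamma/(1-\gamma)=(\eta-\eps)/4$ (\Cref{bclaim:estclosel1}), and then applies a \emph{one-sided} Chernoff--Hoeffding bound separately in the completeness and soundness cases: the empirical mean must cross a gap of $(\eta-\eps)/4$, giving failure $\exp(-2((\eta-\eps)/4)^2\alpha)\le\delta/2$ per case, plus $\delta/2$ for $\neg\cA$, totalling $\delta$. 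This is precisely the ``union bound over two one-sided events'' you gestured at: $\delta_t=2\delta$ is not a typo but is deliberately exploitable because in each case only one tail of the deviation can cause an error, so the two-sided budget of the estimator is never spent. The buy of your black-box route is brevity and reusability (any $(\zeta,\delta)$-estimator would do); the buy of the paper's white-box route is that the stated parameter settings inside \gstt ($\delta_t=2\delta$, $\alpha=\tfrac{8}{(\eta-\eps)^2}\log\tfrac{2}{\delta}$) come out with total failure exactly $\delta$ rather than $2\delta$. Your second worry, the strict inequality at the threshold ($\widehat{\mathsf{dist}}>K$ versus $\ge K$), is also present in the paper's own soundness bound ($\Pr(val/\alpha<K)\le\delta/2$); it is immaterial because the estimate involves continuous (exponential) randomness so ties occur with probability zero, or can be dissolved by shrinking $\zeta$ infinitesimally as you suggest. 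With either fix for the confidence bookkeeping, your argument is sound and the sample complexity claim $\widetilde{\Oh}\left(n^2/(\eta-\eps)^4\right)$ follows as you state, since all samples are drawn inside the single call to \gst.
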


\noindent We note that our tester is general enough that when $\igs$ is $\eps$-close to $\iws$ in $\ell_{\infty}$-distance~\footnote{$\igs$ is $\eps$-close to $\iws$ on input $\psi$ in $\ell_{\infty}$-distance if for every $x\in \{0,1\}^n$, $(1-\eps)\cD^{\iws, \psi}(x) \leq \cD^{\igs, \psi}(x) \leq (1+ \eps) \cD^{\iws, \psi}(x)$.}, then \gstt outputs \accept. Moreover, If $\gstt$ outputs reject on input $\psi$, then one can extract a configuration (witness of rejection) $\psi_e$ such that $\igs$ and $\iws$ are $\eta$-far.

\color{black}

\section{Evaluation Results}\label{sec:evalulation-main} 

\begin{table*}[!htb]
\begin{tabular}{|l|l|lll|lll|lll|}
\hline
\multicolumn{1}{|l|}{\textbf{}}          & \multicolumn{1}{l|}{\textbf{}}          & \multicolumn{3}{l|}{~~~~{\lqck}}                                                                      & \multicolumn{3}{l|}{{~~~~~~~~~~~~~~~~~~~~\lsts}}                                                                               & \multicolumn{3}{l|}{{~~~~~~~~~~~~~~~~\lcms}}                                                                               \\ \hline
\multicolumn{1}{|l|}{{Instances}} & \multicolumn{1}{l|}{{dim}} & \multicolumn{1}{l|}{{Estd $\dtv$}} & \multicolumn{1}{l|}{{\#samples}} & \multicolumn{1}{l|}{{A/R}} & \multicolumn{1}{l|}{{Estd $\dtv$}} & \multicolumn{1}{l|}{{\#samples}} & \multicolumn{1}{l|}{{A/R}} & \multicolumn{1}{l|}{{Estd $\dtv$}} & \multicolumn{1}{l|}{{\#samples}} & \multicolumn{1}{l|}{{A/R}} \\ \hline
avgdeg\_3\_008\_2                        & 19                                      & 0.1854                            & 9986426                                 & A                                 & 0.0205                            & 11013078                                & A                                 & 0.1772                            & 9914721                                 & A                                 \\
avgdeg\_3\_010\_2                        & 30                                      & 0.1551                            & 24537279                                & A                                 & 0.0155                            & 24758147                                & A                                 & 0.1267                            & 24126731                                & A                                 \\
avgdeg\_5\_010\_3                        & 16                                      & 0.0976                            & 7593533                                 & A                                 & 0.0338                            & 7338508                                 & A                                 & 0.1135                            & 7261255                                 & A                                 \\
avgdeg\_5\_010\_4                        & 11                                      & 0.0503                            & 3486025                                 & A                                 & 0.0387                            & 3475635                                 & A                                 & 0.1147                            & 3412151                                 & A                                 \\
bn\_andes\_010\_1                        & 35                                      & 0.2742                            & 33557190                                & A                                 & 0.0396                            & 33536595                                & A                                 & 0.1601                            & 33235104                                & A                                 \\
bn\_diabetes\_010\_3                     & 26                                      & 0.1955                            & 19211200                                & A                                 & 0.0009                            & 18847561                                & A                                 & 0.1478                            & 18539480                                & A                                 \\
bn\_link\_010\_4                         & 28                                      & 0.2024                            & 21482230                                & A                                 & 0.0346                            & 22377750                                & A                                 & 0.1635                            & 21161624                                & A                                 \\
bn\_munin\_010\_1                        & 33                                      & 0.2414                            & 30348931                                & A                                 & 0.0448                            & 30693619                                & A                                 & 0.1230                            & 30218998                                & A                                 \\
bn\_pigs\_010\_1                         & 36                                      & 0.3106                            & 36917129                                & R                                 & 0.0569                            & 36311963                                & A                                 & 0.1353                            & 35978964                                & A                                 \\
bipartite\_0.2\_008\_4                   & 25                                      & 0.3204                            & 17761820                                & R                                 & 0.0073                            & 17840945                                & A                                 & 0.1153                            & 17546682                                & A                                 \\
bipartite\_0.2\_010\_1                   & 41                                      & 0.3299                            & 46244946                                & R                                 & 0.1528                            & 48135745                                & A                                 & 0.1461                            & 47003971                                & A                                 \\
bipartite\_0.5\_008\_4                   & 22                                      & 0.2977                            & 13144132                                & A                                 & 0.0528                            & 13424946                                & A                                 & 0.1059                            & 13317859                                & A                                 \\
bipartite\_0.5\_010\_1                   & 36                                      & 0.3082                            & 35875122                                & R                                 & 0.0037                            & 36728064                                & A                                 & 0.1472                            & 35823878                                & A                                 \\ \hline
\end{tabular}
\caption{For each sampler the three columns represent the estimated $\dtv$, number of samples consumed by \gst and the output of \gstt. ``A'' and ``R'' represents \accept and \reject respectively.}\label{tab:some}
\end{table*}

\noindent To evaluate the practical effectiveness of our proposed algorithms, we implemented prototype of \gst and \gstt in Python3\footnote{
codes and experimental results are available at $\mathtt{https://github.com/uddaloksarkar/cubeprobe}$.}
We use \gst to estimate the variation distance ($\dtv$) of three linear extension samplers from a perfect uniform sampler. SAT solvers power the backends of these linear extension samplers.
The objective of our empirical evaluation was to answer the following:

\noindent \textbf{RQ1} Can \gst estimate the distance of linear extension samplers from a known (e.g., uniform) sampler?

\noindent \textbf{RQ2} How many samples \gst requires to estimate the distance?

\noindent \textbf{RQ3} How do the linear extension samplers behave with an increasing number of dimensions?  

\paragraph{Boolean encoding of Poset}
Given a poset $\cP = (S, \preceq_P)$, we encode it using a Boolean formula $\varphi_\cP$ in conjunctive normal form (CNF), as described in~\cite{talvitie2018counting}:
\begin{itemize}
    \item[1] for all elements $a,b\in S$, the formula $\varphi_\cP$ contains the variables of the form $v_{ab}$ such that $v_{ab} = 1$ represents $a \preceq b$ and $v_{ab} = 0$ represents $b \preceq a$.
    \item[2] The CNF formula $\varphi_\cP$ contains the following clauses.
    Type-1: $v_{ab}$ for all $a,b\in S$ such that $a \preceq_\cP b$. This enforces the poset relation $\preceq_\cP$. 
    Type-2: $\neg v_{ab} \lor \neg v_{bc} \lor v_{ac}$ for all $a,b,c \in S$ to guarantee the transitivity.
\end{itemize} 
This reduction requires $\comb{|S|}{2}$ many variables and $\perm{|S|}{3}$ many clauses of type-2. The number of clauses of type-1 depends on the number of edges in the cover graph of $\cP$.

\vspace{-5pt}

\subsection{Experimental Setup}
\paragraph{Samplers Used:}
To assess the performance of $\gst$ and $\gstt$, we utilized three different linear extension samplers- \texttt{LxtQuicksampler}, \texttt{LxtSTS}, \texttt{LxtCMSGen}, to estimate their $\dtv$ distances from a uniform sampler. The backend of these samplers are powered by three state-of-the-art CNF samplers: \texttt{QuickSampler}~\cite{dutra2018efficient}, \texttt{STS}~\cite{ermon2012uniform}, \texttt{CMSGen}~\cite{GSCM21}. 
A poset-to-CNF encoder precedes these CNF samplers, and a Boolean string-to-poset extractor succeeds the CNF samplers to build the linear extension samplers. 
\noindent We also required access to a known uniform sampler which is equivalent to having access to a linear extension counter\footnote{For a set $\cS$ if we know the size of the set $\size{\cS}$, we know the mass of each element to be $1/\size{\cS}$ in a uniform sampler.}. We utilized an exact model counter for CNF formulas to meet this need: \texttt{SharpSAT-TD}~\cite{korhonen2021sharpsat}.

\paragraph{Poset Instances: } We adopted a subset of the poset instances from the experimental setup of \cite{talvitie2018scalable} and \cite{talvitie2018counting} to evaluate \gst and \gstt. The instances include three different kinds of posets. (a) posets of type $\mathsf{avgdeg_k}$ are generated from DAGs with average indegree of $k=3,5$; (b) posets of type $\mathsf{bipartite_p}$ have been generated by from bipartite set $S = A\cup B$ by adding the order constraint $a\prec b$ (resp. $b \prec a$) with probability $p$ (resp. $1 - p$) for all $(a,b) \in A \times B$; (c) posets of type $\mathsf{bayesiannetwork}$ is obtained from a transitive closure a randomly sampled subgraph of bayesian networks, obtained from \cite{repo}.

\vspace{-3pt}

\paragraph{Parameters Initialization:}
For our experiments with \gst, the approximation parameter $\zeta$ and confidence parameter $\delta$ are set to be 0.3 and 0.2. Our tester \gstt takes a closeness parameter $\eps$, farness parameter $\eta$, and confidence parameter $\delta$. For our experiments these are set to be $\eps : 0.01$, $\eta : 0.61$, and $\delta : 0.1$, respectively. 

\vspace{-5pt}

\paragraph{Environment} 
All experiments are carried out on a high-performance computer cluster, where each node consists of AMD EPYC 7713 CPUs with 2x64 cores and 512 GB memory. All tests were run in multi-threaded mode with 8 threads per instance per sampler with a timeout of 12 hrs. 
\begin{figure}
    \centering
    \includegraphics[scale = 0.4]{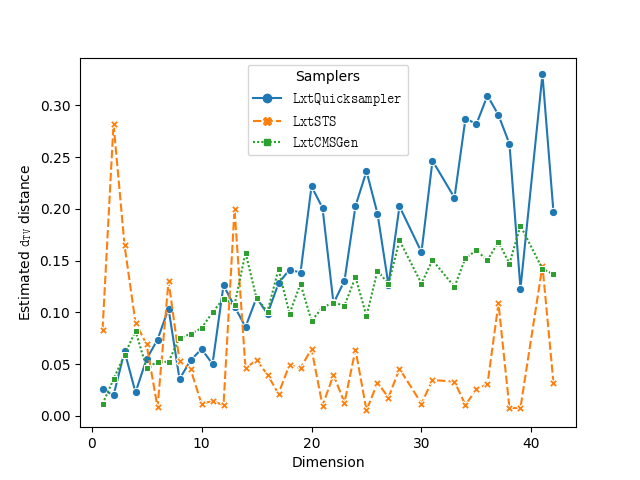}
    \caption{\small Estimated TV distances of samplers (from uniformity) as a function of the dimension. For each dimension, we take the median $\dtv$ over all the instances of that dimension.}
    \label{fig:dimdtv}
\end{figure}

\subsection{Experimental Results \& Discussion}
\paragraph{RQ1}
Table~\ref{tab:some} shows a subset of our experimental results. Due to space constraints, we have postponed presenting our comprehensive experimental results to the supplementary material. We found that among 90 instances: 
\begin{itemize}
    \item In 48 instances $\lqck$ has maximum $\dtv$, in 14 instances $\lsts$ has maximum $\dtv$ distance and in 28 instances $\lcms$ has maximum $\dtv$ distance from uniform;
    \item In 10 instances $\lqck$ has minimum $\dtv$ distance, in 69 instances $\lsts$ has minimum $\dtv$ distance and in 11 instances $\lcms$ has minimum $\dtv$ distance from uniform;
\end{itemize}
These observations indicate that $\lsts$ serves as a linear extension sampler that closely resembles uniform distribution characteristics. At the same time, $\lqck$ deviates significantly from the traits of a uniform-like linear extension sampler. $\lcms$ falls in an intermediate position between these two.

\paragraph{RQ2} 
\Cref{tab:some} reflects that the number of samples drawn by \gst depends on the dimension of an instance. Again, when the dimension is kept constant, the number of samples drawn remains similar across all runs.

\paragraph{RQ3}
In \Cref{fig:dimdtv}, we observe that for instances with lower dimensions, both $\lqck$ and $\lcms$ exhibit behavior relatively close to uniform sampling. However, as the dimension increases, $\dtv$ between these two samplers from uniformity increases. In contrast, $\lsts$ shows a different behavior. In lower dimensions, the estimated $\dtv$ distance can be notably high for certain instances, yet $\dtv$ tends to stabilize as the dimension increases. It is worth highlighting that, in higher dimensions, $\lsts$ demonstrates a more uniform-like sampling behavior compared to the other two samplers.
\section{Conclusion}
\noindent In this paper, we have designed the first self-reducible sampler tester, and used it to test linear extension samplers. We have also designed a novel variation distance estimator in the subcube-conditioning model along the way.

\paragraph{Limitations of our work}
Our algorithm takes $\widetilde{\Oh}(n^2)$ samples while the known lower bound for tolerant testing with subcube conditioning is of $\Omega(n/\log n)$ for this task \cite{canonne2020testing}. Moreover, our algorithm works when the samplers are self-reducible, which is required for our analysis. So our algorithm can not handle non-self-reducible samplers, such as in \cite{grosse2006computing,talvitie2020exact}.

\newpage

\section{Acknowledgements}
\noindent Rishiraj Bhattacharyya acknowledges the support of UKRI by EPSRC grant number EP/Y001680/1. Uddalok Sarkar is supported by the Google PhD Fellowship. Sayantan Sen's research is supported by the National Research Foundation Singapore under its NRF Fellowship Programme (NRF-NRFFAI1-2019-0002). This research is part of the programme DesCartes and is supported by the National Research Foundation, Prime Minister’s Office, Singapore, under its Campus for Research Excellence and Technological Enterprise (CREATE) programme. The computational works of this article were performed on the resources of the National Supercomputing Centre, Singapore $\mathtt{https://www.nscc.sg}$.

\bibliography{reference.bib}

\appendix

\newpage

\begin{center}
    {\Large \bf SUPPLEMENTARY MATERIAL}
\end{center}

\section{Probability Definitions and Useful Concentration Bounds}
\label{sec:prob} 
\begin{defi}[{Bernoulli distribution}]
A random variable $X \in \{0,1\}$ is said to follow Bernoulli distribution with parameter $p$ if $\Pr(X=1)=p$ and $\Pr(X=0)=1-p$ for some parameter $p \in [0,1]$. This is represented as $X \sim \mathsf{Ber}(p)$.   
\end{defi}

In our work, we use the following concentration inequalities. See~\cite{dubhashi2009concentration} for proofs.

\begin{lem}[Markov Inequality]
    Let $X$ be a random variable that only
takes non-negative values. Then for any $\alpha>0$, it holds that
\begin{align*}
    \Pr(X>\alpha) \leq \frac{\E[X]}{\alpha}
\end{align*}
\end{lem}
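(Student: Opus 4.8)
The plan is to prove Markov's inequality by the classical ``indicator domination'' argument, reducing the tail probability to an expectation via a pointwise inequality. The whole proof hinges on comparing $X$ against a scaled indicator of the event $\{X > \alpha\}$, so the first thing I would do is introduce the indicator random variable $Y := \mathbb{1}[X > \alpha]$, which takes the value $1$ on the event $\{X > \alpha\}$ and $0$ otherwise, and recall that $\E[Y] = \Pr(X > \alpha)$ by definition of the indicator.

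The key step is to establish the pointwise bound $\alpha \cdot Y \leq X$, which holds for every outcome in the sample space. I would verify this by a two-case check: on outcomes where $X > \alpha$ we have $Y = 1$, so $\alpha Y = \alpha < X$; on outcomes where $X \leq \alpha$ we have $Y = 0$, so $\alpha Y = 0 \leq X$, where this last inequality is exactly where the hypothesis that $X$ takes only non-negative values is used. Thus $\alpha Y \leq X$ holds everywhere.

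From here I would simply take expectations of both sides of the pointwise inequality, invoking monotonicity of expectation (if one random variable dominates another pointwise, its expectation is at least as large) together with linearity to pull the constant $\alpha > 0$ outside. This yields
\begin{align*}
\alpha \cdot \Pr(X > \alpha) \;=\; \alpha \, \E[Y] \;=\; \E[\alpha Y] \;\leq\; \E[X].
\end{align*}
Dividing through by $\alpha > 0$ (which is permitted precisely because $\alpha$ is assumed strictly positive, so the direction of the inequality is preserved) gives the claimed bound $\Pr(X > \alpha) \leq \E[X]/\alpha$.

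Honestly, there is no serious obstacle here, as this is a foundational inequality; the only point demanding any care is the pointwise domination $\alpha Y \leq X$, where one must remember that the non-negativity hypothesis on $X$ is genuinely needed to handle the complementary event $\{X \leq \alpha\}$ (without it, $X$ could be negative there and the inequality $0 \leq X$ would fail). A secondary technical subtlety worth a remark is the implicit assumption that $\E[X]$ exists (possibly $+\infty$), but since $X \geq 0$ the expectation is always well-defined in $[0,\infty]$ and the inequality holds trivially when $\E[X] = \infty$.
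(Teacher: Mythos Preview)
Your proof is correct and is the standard indicator-domination argument for Markov's inequality. The paper itself does not supply a proof of this lemma at all; it simply states the result and refers the reader to \cite{dubhashi2009concentration}, so there is no in-paper argument to compare against.
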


\begin{lem}[Chernoff-Hoeffding bound]
\label{lem:cher_bound2}
Let $X_1, \ldots, X_n$ be independent random variables such that $X_i \in [0,1]$. For $X=\sum\limits_{i=1}^n X_i$ and $\mu_l \leq \E[X] \leq \mu_h$, the followings hold for any $\delta >0$.
\begin{itemize}
\item[(i)] $\Pr \left( X \geq \mu_h + \delta \right) \leq \exp{\left(\frac{-2\delta^2}{n}\right)}$.
\item[(ii)] $\Pr \left( X \leq \mu_l - \delta \right) \leq \exp{\left(\frac{-2\delta^2}{n}\right)}$.
\end{itemize}

\end{lem}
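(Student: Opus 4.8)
The plan is to recognise this as the standard Hoeffding inequality for a sum of independent $[0,1]$-valued random variables, stated with the convenience of a two-sided window $\mu_l\le\E[X]\le\mu_h$, and to reduce it to the one-sided bound $\Pr[X\ge\E[X]+\delta]\le e^{-2\delta^2/n}$. For part (i), since $\mu_h\ge\E[X]$ the event $\{X\ge\mu_h+\delta\}$ is contained in $\{X\ge\E[X]+\delta\}$, so (i) is immediate from the one-sided bound. For part (ii), I would pass to $Y_i:=1-X_i\in[0,1]$ and $Y:=\sum_i Y_i=n-X$; since $\E[Y]=n-\E[X]\le n-\mu_l$, the event $\{X\le\mu_l-\delta\}$ equals $\{Y\ge(n-\mu_l)+\delta\}\subseteq\{Y\ge\E[Y]+\delta\}$, so (ii) follows from the one-sided bound applied to the independent $[0,1]$-variables $Y_i$.

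To prove the one-sided bound I would run the exponential-moment (Chernoff) method. First, for every $\lambda>0$, Markov's inequality applied to $e^{\lambda(X-\E[X])}$ gives $\Pr[X-\E[X]\ge\delta]\le e^{-\lambda\delta}\,\E[e^{\lambda(X-\E[X])}]$, and by independence the right-hand side factorises as $e^{-\lambda\delta}\prod_{i=1}^n\E[e^{\lambda(X_i-\E[X_i])}]$. Second, I would bound each factor via Hoeffding's lemma: a centered random variable $Z$ supported on an interval of length $\ell$ satisfies $\E[e^{\lambda Z}]\le e^{\lambda^2\ell^2/8}$; here $Z=X_i-\E[X_i]$ lies in an interval of length $1$, so every factor is at most $e^{\lambda^2/8}$. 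This yields $\Pr[X-\E[X]\ge\delta]\le e^{-\lambda\delta+n\lambda^2/8}$, and minimising the exponent over $\lambda>0$ (the optimum is $\lambda=4\delta/n$) gives exactly $e^{-2\delta^2/n}$. Combining with the two reductions above finishes both (i) and (ii).

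The one genuinely non-trivial ingredient — and the step I expect to be the main obstacle — is Hoeffding's lemma. I would prove it by convexity of $t\mapsto e^{\lambda t}$ on $[a,b]$: writing each point as a convex combination of the endpoints and taking expectations with $\E[Z]=0$ gives $\E[e^{\lambda Z}]\le\frac{b}{b-a}e^{\lambda a}-\frac{a}{b-a}e^{\lambda b}=e^{\varphi(\lambda)}$, where $\varphi(\lambda)=\lambda a+\log(p+(1-p)e^{\lambda(b-a)})$ with $p=b/(b-a)$. One checks $\varphi(0)=\varphi'(0)=0$ and that $\varphi''(\lambda)$ is the variance of a random variable valued in $[a,b]$, hence at most $(b-a)^2/4$ by Popoviciu's inequality; Taylor's theorem with this second-derivative bound then gives $\varphi(\lambda)\le\lambda^2(b-a)^2/8$. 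Everything else is bookkeeping, so once Hoeffding's lemma is in place the argument above is complete; alternatively one may simply cite Hoeffding's lemma from~\cite{dubhashi2009concentration}, leaving only the two reductions and the Chernoff optimisation to verify.
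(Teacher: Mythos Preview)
Your argument is correct: the two reductions to the one-sided Hoeffding inequality are valid, the Chernoff-method derivation with Hoeffding's lemma is the standard route, and your optimisation of $\lambda$ indeed gives the exponent $-2\delta^2/n$.

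As for comparison with the paper: the paper does not actually prove this lemma at all --- it is stated as a standard concentration inequality and the reader is referred to~\cite{dubhashi2009concentration} for a proof. So your self-contained derivation (reductions via monotonicity and the substitution $Y_i=1-X_i$, followed by the exponential-moment method and Hoeffding's lemma) goes strictly beyond what the paper provides. Either your full argument or your suggested shortcut of citing Hoeffding's lemma from~\cite{dubhashi2009concentration} is entirely acceptable here.
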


In the analysis of our estimator, one of the crucial components is Sub-Gaussian errors which is formally defined as follows:

\begin{defi}[{\bf SubGaussian random variable}~\cite{buldygin1980sub}]
\label{lem:subg}
A random variable $X$ is said to be \emph{sub-Gaussian} (SubG in short) with parameter $\alpha^2$ if and only if its tails are dominated by a Gaussian of parameter $\alpha^2$, i.e., $$\Pr(|X| \geq t) \leq 2 \exp(-t^2/2\alpha^2), ~~~~\mbox{for all } t \geq 0.$$
\end{defi}
\begin{lem}[\cite{buldygin1980sub}]
\label{lem:subg2}
Consider $n$ independent random variables $X_i \sim \mathsf{SubG}(\alpha_i^2)$ for every $i \in [n]$, then $$X_1 + ... + X_n \sim  \mathsf{SubG}(\alpha_1^2 + \alpha_2^2 + \ldots + \alpha_n^2)$$
\end{lem}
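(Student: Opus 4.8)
The plan is to prove the additivity property through the moment-generating-function (MGF) characterization of sub-Gaussianity, which is the formulation underlying the Buldygin--Kozachenko framework cited for Definition~\ref{lem:subg}. Concretely, I will use the equivalent description that $X \sim \mathsf{SubG}(\alpha^2)$ is captured by the MGF bound
\[
\E\!\left[e^{\lambda X}\right] \leq e^{\lambda^2 \alpha^2/2} \qquad \text{for all } \lambda \in \mathbb{R},
\]
from which the two-sided tail bound displayed in Definition~\ref{lem:subg} is a derived consequence. It is this MGF form, rather than the tail form, that makes the variance parameters add cleanly.

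First I would record the one direction of the equivalence that is invoked at the very end: an MGF bound of the shape $\E[e^{\lambda Y}] \leq e^{\lambda^2 \sigma^2/2}$ yields the tail bound $\Pr(|Y| \geq t) \leq 2 e^{-t^2/2\sigma^2}$. Indeed, applying Markov's inequality to $e^{\lambda Y}$ for $\lambda > 0$ gives $\Pr(Y \geq t) \leq e^{-\lambda t}\,\E[e^{\lambda Y}] \leq e^{-\lambda t + \lambda^2 \sigma^2/2}$, and optimizing by setting $\lambda = t/\sigma^2$ produces $\Pr(Y \geq t) \leq e^{-t^2/2\sigma^2}$. Running the same argument on $-Y$ and taking a union bound certifies $Y \sim \mathsf{SubG}(\sigma^2)$ in the sense of Definition~\ref{lem:subg}.

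The heart of the argument is the factorization of the MGF under independence. Writing $Y = \sum_{i=1}^n X_i$ and using that the $X_i$ are independent,
\[
\E\!\left[e^{\lambda Y}\right] = \E\!\left[\prod_{i=1}^n e^{\lambda X_i}\right] = \prod_{i=1}^n \E\!\left[e^{\lambda X_i}\right] \leq \prod_{i=1}^n e^{\lambda^2 \alpha_i^2/2} = e^{\lambda^2(\alpha_1^2 + \cdots + \alpha_n^2)/2},
\]
where the middle inequality applies the per-variable MGF bound. Setting $\sigma^2 = \alpha_1^2 + \cdots + \alpha_n^2$ and invoking the MGF-to-tail step above concludes that $X_1 + \cdots + X_n \sim \mathsf{SubG}(\sigma^2)$, exactly as claimed.

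The main obstacle is conceptual rather than computational. The lemma asserts a constant-free additivity with summed parameter precisely $\alpha_1^2 + \cdots + \alpha_n^2$, and this exactness holds only when $\mathsf{SubG}(\alpha^2)$ is understood through the optimal MGF/variance-proxy parameter, whereas Definition~\ref{lem:subg} is phrased purely via the tail bound; the two formulations are equivalent only up to absolute constants. I would therefore make explicit at the outset that, following Buldygin--Kozachenko, the parameter $\alpha_i^2$ entering $\mathsf{SubG}(\alpha_i^2)$ is the one realizing the MGF bound $\E[e^{\lambda X_i}] \leq e^{\lambda^2 \alpha_i^2/2}$, with the displayed tail bound being the consequence. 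This justifies using the per-variable MGF bounds in the factorization and guarantees that the resulting parameter is \emph{exactly} the sum, rather than the sum inflated by an unspecified constant.
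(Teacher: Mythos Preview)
The paper does not supply its own proof of this lemma; it is simply quoted with a citation to Buldygin--Kozachenko and used as a black box. Your argument via the moment-generating-function characterization is exactly the standard proof from that reference, and your diagnosis of the one subtlety---that the paper's tail-based Definition~\ref{lem:subg} and the MGF-based parameter are equivalent only up to absolute constants, so the \emph{exact} additivity $\sum_i \alpha_i^2$ requires reading the parameter in the MGF sense---is correct and is the right way to reconcile the stated definition with the stated lemma.
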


We use the \gbas algorithm due to Huber~\cite{rsa/Huber17}. The relevant details of the algorithm is captured by the following theorem: 
\begin{theo}[Huber~\cite{rsa/Huber17}]
\label{thm:huber}
Let $\bern$ be a Bernoulli distribution parameterised by $p$. Fix $\eps\in (0,1/4)$. The randomised algorithm \gbas~\footnote{\gbas stands for Gamma Bernoulli Approximation Scheme.} on input $k \geq \frac{3\log(2\delta^{-1})}{\eps^2}$, samples $X_1,X_2,\ldots\sim \bern$ and outputs $\widehat{p}$ after $T$ samples such that
\begin{itemize}
\item[(i)] $\Pr \left(\left\lvert (\widehat{p}/p) -1 \right\rvert > \eps\right)\leq \delta$.
\item[(ii)]  $\E\left[T\right]=k/p$. 
\end{itemize}
\end{theo}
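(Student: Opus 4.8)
The plan is to reduce both claims to a single clean distributional identity for the accumulated clock value $r$ produced by \gbas. Let $T$ denote the (random) number of iterations of the while loop, i.e. the number of trials drawn until the $k$-th success, and let $a_1, a_2, \ldots \sim \mathsf{Exp}(1)$ be the i.i.d. exponentials drawn in the loop, independent of the Bernoulli sequence $X_1, X_2, \ldots \sim \bern$. Then $T$ is negative binomial (the waiting time for the $k$-th success in i.i.d. $\mathsf{Ber}(p)$ trials) and $r = \sum_{j=1}^{T} a_j$. Claim (ii) is then immediate: the expected number of $\mathsf{Ber}(p)$ trials to accumulate $k$ successes is $k/p$, so $\E[T] = k/p$.

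For claim (i), I would first establish that $r \sim \mathsf{Gamma}(k,p)$ (shape $k$, rate $p$). The cleanest route uses the renewal structure: the number of trials between two consecutive successes is $\mathsf{Geom}(p)$ on $\{1,2,\dots\}$, and a $\mathsf{Geom}(p)$-sum of i.i.d. $\mathsf{Exp}(1)$ variables is $\mathsf{Exp}(p)$ — a one-line moment-generating-function computation, since $\E[(1-\theta)^{-N}] = p/(p-\theta)$ for $N \sim \mathsf{Geom}(p)$, which is the MGF of $\mathsf{Exp}(p)$. By memorylessness the clock times accumulated over the $k$ inter-success blocks are i.i.d. $\mathsf{Exp}(p)$, so their total $r$ is $\mathsf{Gamma}(k,p)$. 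Equivalently, one can condition on $T=t$, use $r \mid T=t \sim \mathsf{Gamma}(t,1)$, and sum against the negative-binomial weights; the exponential series collapses the mixture to the $\mathsf{Gamma}(k,p)$ density $p^k x^{k-1}e^{-px}/(k-1)!$.

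Given the identity, set $Y := p\,r \sim \mathsf{Gamma}(k,1)$, a sum of $k$ i.i.d. $\mathsf{Exp}(1)$ variables with mean $k$. Since $\widehat p = (k-1)/r$, we have $\widehat p / p = (k-1)/Y$, so the event $\{|\widehat p/p - 1| > \eps\}$ is exactly $\{Y < (k-1)/(1+\eps)\} \cup \{Y > (k-1)/(1-\eps)\}$. (The choice of $k-1$ rather than $k$ is what makes $\widehat p$ unbiased, since $\E[1/r] = p/(k-1)$, giving $\E[\widehat p] = p$.) I would then bound each of these two deviations of $Y$ from its mean $k$ by the standard Chernoff bounds for a sum of exponentials, namely $\Pr(Y \le (1-\lambda)k) \le e^{-k\lambda^2/2}$ and $\Pr(Y \ge (1+\lambda)k) \le e^{-k(\lambda-\ln(1+\lambda))}$, and combine them by a union bound.

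The main obstacle — and the only place where real care is needed — is the constant chase in this last step. One must check that for $\eps \in (0,1/4)$ the two thresholds correspond to relative deviations $\lambda_\pm \ge \eps/(1+\eps)$ (the $1/k$ correction coming from $k-1$ being negligible once $k \ge 3\log(2\delta^{-1})/\eps^2 \gg 1/\eps$), and then verify that both $k\lambda_-^2/2$ and $k(\lambda_+ - \ln(1+\lambda_+))$ exceed $\log(2\delta^{-1})$, so that each tail is at most $\delta/2$. The restriction $\eps < 1/4$ is exactly what keeps the Taylor remainders $\lambda + \ln(1-\lambda) \le -\lambda^2/2$ and $\lambda - \ln(1+\lambda) \ge \lambda^2/(2(1+\lambda))$ controlled enough that the stated constant $3$ suffices.
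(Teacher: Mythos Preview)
The paper does not prove this theorem; it is quoted from \cite{rsa/Huber17} and used as a black box in the analysis of \estimate. Your argument is correct and is essentially Huber's own proof: the central structural fact is the distributional identity $pr \sim \mathsf{Gamma}(k,1)$ (independent of $p$), which is precisely why the scheme is named the \emph{Gamma} Bernoulli Approximation Scheme, and your MGF route through the $\mathsf{Geom}(p)$-compound of $\mathsf{Exp}(1)$ variables is the standard derivation. The Chernoff tail bounds on $\mathsf{Gamma}(k,1)$ then finish the job, and you have correctly identified the constant-chasing (matching the stated constant $3$ under $\eps<1/4$) as the only place where care is required; Huber's paper carries this out explicitly.
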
 
\section{Analysis of Estimator \gst}
In this section we proof the correctness of \gst. We recall the theorem below. 
\begin{theo}\label{theo:estapp}
Consider an unknown self-reducible sampler $\igs$, a known self-reducible sampler $\iws$, an input $\psi$, an approximation parameter $\zeta \in (0,1)$ and a confidence parameter $\delta \in (0,1)$. Our \whst \ \gst is a $(\zeta,\delta)$-approx $\dtv$ estimator and \gst takes  $\widetilde{\Oh}(n^2/\zeta^4)$ samples.
\end{theo}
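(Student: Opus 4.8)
The plan is to establish two things separately: the correctness of the estimate produced by \gst, and the sample complexity bound. For correctness, I would start from the identity $\dtvp(\igs,\iws) = \E_{x \sim \cD^{\igs,\psi}}\left[\max\left(0, 1 - \cD^{\iws,\psi}(x)/\cD^{\igs,\psi}(x)\right)\right]$, which holds because the positive part of $1 - Q(x)/P(x)$ summed against $P$ gives exactly $\sum_{x: P(x) > Q(x)}(P(x)-Q(x)) = \dtv(P,Q)$. Call the per-sample quantity $f(x) = \max(0, 1 - \cD^{\iws,\psi}(x)/\cD^{\igs,\psi}(x))$, and let $\widehat{f}(x) = \max(0, 1 - \cD^{\iws,\psi}(x)/\widehat{\cD_x^{\igs,\psi}})$ be the quantity actually computed in \Cref{line:toltester:line11}. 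The estimator outputs $\frac{1}{\alpha}\sum_{x \in S}\widehat{f}(x)$. I would decompose the error into (a) the sampling error of averaging $f$ over $\alpha$ i.i.d.\ samples, controlled by a Chernoff--Hoeffding bound since $f \in [0,1]$: with $\alpha = \frac{2}{\zeta^2}\log\frac{4}{\delta}$, the empirical mean of $f$ is within $\zeta/2$ of $\dtvp(\igs,\iws)$ with probability $\geq 1 - \delta/2$; and (b) the per-sample estimation error $|\widehat{f}(x) - f(x)|$, which I must show is at most $\zeta/2$ for all $x \in S$ simultaneously with probability $\geq 1-\delta/2$.

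\textbf{Controlling the per-sample estimation error.} This is the technical heart. For a fixed $x$, \estimate computes $\widehat{\cD_x^{\igs,\psi}} = \prod_{i=1}^n \widehat{P_i}$ where each $\widehat{P_i}$ is the \gbas\ estimate of the $i$-th conditional marginal $P_i := \cD^{\igs,\widehat\psi}$ with $\widehat\psi = \mathsf{SubCond}(\psi, x_1\cdots x_{i-1})$; by self-reducibility, $\cD^{\igs,\psi}(x) = \prod_{i=1}^n P_i$ exactly (chain rule). By \Cref{thm:huber}, with $k = \lceil \frac{3n}{\gamma^2}\log\frac{2n}{\delta'}\rceil$ and $\eps_i = \gamma/\sqrt{n}$ in Huber's theorem (note $k \geq 3\log(2/\delta')\cdot n/\gamma^2$ as required), each $\widehat{P_i}$ satisfies $|\widehat{P_i}/P_i - 1| \leq \gamma/\sqrt{n}$ with probability $\geq 1-\delta'$ where $\delta' = \delta/(2\alpha)$. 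Writing $\widehat{P_i} = P_i(1+Y_i)$ with $|Y_i| \leq \gamma/\sqrt{n}$, I would argue $\log(\widehat{\cD_x^{\igs,\psi}}/\cD^{\igs,\psi}(x)) = \sum_i \log(1+Y_i)$. The key observation stated in the overview is that the error random variable $\log(1+Y_i)$ is sub-Gaussian with parameter $\Oh(\gamma^2/n)$ (Huber's scheme produces a multiplicative error whose logarithm has Gaussian-dominated tails); then by \Cref{lem:subg2} the sum is $\mathsf{SubG}(\Oh(\gamma^2))$, so $|\sum_i \log(1+Y_i)| \leq 1.11\cdot \gamma$ (say) with high probability — crucially \emph{this beats the naive union-bound/triangle-inequality estimate of $\sqrt n \cdot \gamma/\sqrt n = \gamma$ only up to constants, but it lets us take $\eps_i = \gamma/\sqrt n$ rather than $\eps_i = \gamma/n$}. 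Exponentiating, $\widehat{\cD_x^{\igs,\psi}} \in [(1-1.11\gamma)\,\cD^{\igs,\psi}(x),\ (1+1.11\gamma)\,\cD^{\igs,\psi}(x)]$ roughly, i.e.\ within multiplicative $(1\pm 1.11\gamma)$. Then a short calculation relating multiplicative error in the denominator of $1 - Q(x)/\widehat{P}(x)$ to additive error in $f$ — using $\gamma = \frac{\zeta}{1.11(2+\zeta)}$ — yields $|\widehat{f}(x) - f(x)| \leq \zeta/2$. Union-bounding over the $n$ calls to \gbas\ inside one \estimate\ call (each failing w.p.\ $\leq \delta'$) and over the $\alpha$ samples in $S$ gives total failure probability $\leq n \cdot \alpha \cdot \delta' \leq \delta/2$; combined with (a), a triangle inequality gives the claimed $(\zeta,\delta)$-guarantee.

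\textbf{Sample complexity.} Each \gbas\ call on the $i$-th marginal uses $\E[T] = k/P_i$ samples by \Cref{thm:huber}(ii). Here the subtlety: $P_i$ can be tiny, so a single call is not bounded — but the \emph{product} telescopes. Summing the expected cost over $i = 1,\dots,n$ for one \estimate\ call gives $k\sum_{i=1}^n 1/P_i$; this needs a bound. I would argue that we only ever run \estimate\ on $x$ drawn from $\cD^{\igs,\psi}$, and in expectation over such $x$ the quantity $\sum_i 1/P_i$ is controlled (or, more carefully, handle it via the structure of the specific samplers / a truncation argument), giving $\widetilde\Oh(k \cdot n) = \widetilde\Oh(n^2/\gamma^2)$ samples per \estimate\ call in expectation. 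Multiplying by $\alpha = \Oh(\zeta^{-2}\log\delta^{-1})$ calls and using $\alpha/\gamma^2 = \Oh(1/\zeta^4)$ (since $\gamma = \Theta(\zeta)$) yields total expected sample complexity $\widetilde\Oh(n^2/\zeta^4)$, as claimed. \textbf{The main obstacle} I anticipate is making the sub-Gaussianity claim for the \gbas\ error rigorous — pinning down why $\log(\widehat p/p)$ is $\mathsf{SubG}$ with the right parameter directly from Huber's Gamma-Bernoulli scheme, rather than just knowing the one-sided multiplicative bound of \Cref{thm:huber}(i) — since the whole $\sqrt n$ savings (and hence the $n^2$ rather than $n^3$ sample bound) hinges on it; a secondary obstacle is rigorously bounding $\E_x[\sum_i 1/P_i]$ for the sample-complexity step.
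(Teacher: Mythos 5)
Your outline tracks the paper's proof almost step for step: the same identity $\dtv(P,Q)=\E_{x\sim P}\max(0,1-Q(x)/P(x))$ (\Cref{lem:dtvexpl1}), the same per-marginal accuracy $\gamma/\sqrt{n}$ from \gbas aggregated via sub-Gaussian concentration, and the same Chernoff--Hoeffding step over the $\alpha$ outer samples. The two places you flag as obstacles are exactly where the paper does its (modest) work. For the sample complexity, which is load-bearing for the $\widetilde{\Oh}(n^2/\zeta^4)$ claim, no truncation or structural assumption is needed: since the string $x$ fed to \estimate is itself drawn from $P$, the bit $x_i$ is distributed according to the conditional marginal, so $\E_{x_i}\left[1/P_i\right]=P(x_i{=}0)\cdot\frac{1}{P(x_i{=}0)}+P(x_i{=}1)\cdot\frac{1}{P(x_i{=}1)}=2$; hence each \gbas call costs $2k$ expected samples, each \estimate call $\Oh\!\left(\frac{n^2}{\gamma^2}\log\frac{2n}{\delta'}\right)$, and a Markov argument converts expectation into the stated bound (\Cref{bcl:estimatel1}, \Cref{lem:sampleest}). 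That two-line computation is the only missing piece of your ``secondary obstacle''. For your ``main obstacle'', the paper is less deep than you anticipate: Claim~\ref{cl:errisubg} simply reads a sub-Gaussian parameter $\frac{\gamma^2}{2n\log(2n/\delta')}$ off the single tail inequality of \Cref{thm:huber}(i) (accuracy $\gamma/\sqrt{n}$, failure $\delta'/n$), and \Cref{lem:subg2} adds the parameters across the $n$ independent marginals to get \Cref{cl:bounderr}; no finer property of the Gamma--Bernoulli scheme is invoked. (Whether a one-point tail bound certifies sub-Gaussianity at all scales, and whether the additivity lemma applies to the uncentered absolute errors $err_i$, is glossed over in the paper too; your instinct that the honest route goes through the centered variables $\log(\widehat{p}/p)$ and the Gamma law inside \gbas is sound, and that is indeed where the $\sqrt{n}$ saving comes from.)

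Two smaller points of divergence. First, the paper does not prove a per-sample bound on $|\widehat{f}(x)-f(x)|$; it bounds the error in expectation via $\E_{x\sim P}\left[Q(x)/P(x)\right]=1$ (\Cref{bclaim:estclosel1}) combined with a pointwise case analysis. Your per-sample route also works, but only because of the truncation at zero: when $Q(x)>P(x)$ the raw difference $|Q(x)/P(x)-Q(x)/\widehat{P}(x)|$ can be arbitrarily large, while $\max(0,1-Q(x)/\widehat{P}(x))\leq \gamma/(1+\gamma)$, so make that observation explicit in your ``short calculation''. Second, a bookkeeping slip: with $\delta'=\delta/2\alpha$, the failure probability per \gbas call is $\delta'/n$ (that is what $k=\lceil\frac{3n}{\gamma^2}\log\frac{2n}{\delta'}\rceil$ buys), not $\delta'$; otherwise your union bound $n\cdot\alpha\cdot\delta'$ would be $n\delta/2$ rather than $\delta/2$.
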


\subsubsection{Proof of Theorem~\ref{theo:estapp} }    

For ease of presentation, we will denote the distribution corresponding to the unknown self-reducible sampler $\igs$ on input $\psi$ as $P$ and distribution corresponding to the known self-reducible sampler $\iws$ on input $\psi$ as $Q$. We start with the correctness of \estimate.

\subsection{Correctness of \estimate}
Let $P(x)$ be the probability mass of the distribution $P$ at the string $x\in \{0,1\}^n$, and $\widehat{P}(x)$ be the estimate of $P(x)$ returned by the subroutine $\estimate$. We use  $P_i$ to denote $P({x_i}\mid {x_{1} ... x_{i-1}})$, the probability mass of the conditional marginal distribution considered at the index $i \in [n]$ at the string $x\in \{0,1\}^n$. Similarly, $\widehat{P_{i}}$ denotes the estimate of $P({x_i}\mid {x_{1} ... x_{i-1}})$ returned by the subroutine $\gbas(P\mid_{x_{1} ... x_{i-1}},x_i)$.

We define $err_i$ is the error in estimating the $i$-th marginal.
\begin{equation}\label{eqn:errori}
err_i = \size{\frac{P_i}{\widehat{P_{i}}} - 1}  
\end{equation}

\begin{cl}\label{cl:errisubg}
Errors $err_i$ are SubGaussian random variables, that is, $err_i \sim \mathsf{SubG}\left(\frac{\gamma^2}{2n \log (2n / \delta')}\right)$.
\end{cl}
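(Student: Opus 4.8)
The plan is to show that the multiplicative error $err_i = |P_i/\widehat{P_i} - 1|$ in estimating the $i$-th conditional marginal is sub-Gaussian with the claimed parameter. First I would recall what \estimate\ and \gbas\ actually feed into this step: the subroutine \gbas\ is called with parameter $k = \lceil \frac{3n}{\gamma^2}\log(\frac{2n}{\delta'})\rceil$, and by \Cref{thm:huber} (Huber's theorem), with $\eps_i := \gamma/\sqrt{n}$ we have $k \geq \frac{3\log(2(\delta'/n)^{-1})}{\eps_i^2}$ exactly matching the required threshold, so \gbas\ guarantees $\Pr(|\widehat{P_i}/P_i - 1| > \eps_i) \leq \delta'/n$. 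So the starting point is a one-sided control: with the convention $\eps_i = \gamma/\sqrt{n}$, the random variable $err_i' := |\widehat{P_i}/P_i - 1|$ satisfies a tail bound at the single scale $t = \eps_i$.

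The key step is to upgrade this single-point tail bound into a genuine sub-Gaussian tail bound $\Pr(err_i \geq t) \leq 2\exp(-t^2/(2\alpha_i^2))$ for \emph{all} $t \geq 0$, with $\alpha_i^2 = \frac{\gamma^2}{2n\log(2n/\delta')}$. The way I would do this is to look inside \gbas: there $\widehat{p} = (k-1)/r$ where $r = \sum_{j=1}^{N} a_j$ is a sum of $N$ i.i.d.\ $\mathsf{Exp}(1)$ variables and $N$ is the (random) number of samples until $s$ reaches $k$, i.e.\ $N \sim \mathsf{NegBin}(k,p)$. Conditioning on $N$, $r$ is a $\mathsf{Gamma}(N,1)$ variable; but the cleaner route, and the one Huber's analysis uses, is that $r$ restricted to the first $k$ ``successes'' has the same law regardless of $p$ after rescaling --- concretely $p \cdot r \sim \mathsf{Gamma}(k,1)$ is false as stated, but the relevant fact is that $\widehat p / p$ has a distribution depending only on $k$ (it is $ (k-1)/\mathrm{Gamma}(k,1)$ in Huber's scheme, an inverse-Gamma-type law), which concentrates around $1$ with fluctuations of order $1/\sqrt{k}$. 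From the explicit Gamma tail bounds (Chernoff bounds for sums of exponentials, or equivalently for Poisson), one gets $\Pr(|\widehat p/p - 1| \geq t) \leq 2\exp(-c k t^2)$ for $t$ in a bounded range, and this is precisely a sub-Gaussian tail with parameter $\Theta(1/k)$. Plugging $k = \Theta(\frac{n}{\gamma^2}\log(2n/\delta'))$ gives parameter $\Theta(\frac{\gamma^2}{n\log(2n/\delta')})$, and chasing the constants through Huber's bound ($\frac{3\log(2\delta'^{-1}n)}{\eps_i^2} \leq k$) yields exactly $\alpha_i^2 = \frac{\gamma^2}{2n\log(2n/\delta')}$. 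Finally, I would note that $err_i = |P_i/\widehat P_i - 1|$ rather than $|\widehat P_i/P_i - 1|$; since on the high-probability event these differ by a bounded multiplicative factor ($|P_i/\widehat P_i - 1| = |\widehat P_i/P_i - 1| \cdot P_i/\widehat P_i \leq (1+o(1))|\widehat P_i/P_i - 1|$ when the error is small), the sub-Gaussian bound transfers, possibly after absorbing a constant $\leq 2$ into the tail constant, which the definition in \Cref{lem:subg} already accommodates.

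The main obstacle, I expect, is the bookkeeping in this last transfer between the two forms of multiplicative error and, more importantly, controlling the tail of $err_i$ in the regime $t \geq 1$ (where $err_i$ can in principle be large if $\widehat P_i$ is a severe underestimate): for the sub-Gaussian inequality to hold for \emph{all} $t \geq 0$, one needs that the probability of $\widehat P_i$ being off by a constant factor or more is at least as small as $2\exp(-\Theta(t^2/\gamma^2 \cdot n\log(2n/\delta')))$, i.e.\ exponentially small in $k$, which is exactly what the Gamma/Poisson tail bounds deliver but requires care at the boundary of Huber's stated range $\eps \in (0,1/4)$. I would handle the large-$t$ regime separately using the raw exponential-sum Chernoff bound (\Cref{lem:cher_bound2} applied to the $a_j$'s, or a direct Gamma tail estimate), observing that the sub-Gaussian envelope is very loose there so crude bounds suffice. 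Once \Cref{cl:errisubg} is established, \Cref{lem:subg2} immediately gives that $\sum_i err_i$ is $\mathsf{SubG}(\sum_i \alpha_i^2) = \mathsf{SubG}(\frac{\gamma^2}{2\log(2n/\delta')})$, which is the quantitative heart of the correctness proof of \estimate.
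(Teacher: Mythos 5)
Your proposal reaches the right conclusion but takes a genuinely different—and substantially more demanding—route than the paper at the key step. The paper's proof of \Cref{cl:errisubg} is a one-line conversion: since \gbas is called with $k \geq \frac{3n}{\gamma^2}\log(\frac{2n}{\delta'})$, Huber's guarantee (\Cref{thm:huber}) gives $\Pr(|err_i| > \gamma/\sqrt{n}) < \delta'/n = 2\exp\left(-(\gamma/\sqrt{n})^2\cdot\frac{n}{\gamma^2}\log(2n/\delta')\right)$, and the paper reads the sub-Gaussian parameter $\frac{\gamma^2}{2n\log(2n/\delta')}$ directly off this single-scale tail; it makes no attempt to verify the tail at other scales $t$, nor does it distinguish $|P_i/\widehat{P_i}-1|$ from $|\widehat{P_i}/P_i-1|$. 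Your plan to open up \gbas and derive tails at every scale from the Gamma structure is exactly the work the paper skips, so you are being more careful than the source. Two remarks on your execution: first, the fact you hedge on is actually true—in \gbas one has $p\cdot r \sim \mathrm{Gamma}(k,1)$ exactly (this thinning identity is the heart of Huber's analysis), so $P_i/\widehat{P_i} = \mathrm{Gamma}(k,1)/(k-1)$ and no inverse-Gamma detour or transfer between the two forms of relative error is needed. Second, the large-$t$ obstacle you flag is real and not mere bookkeeping: the upper tail of $\mathrm{Gamma}(k,1)/(k-1)$ decays like $\exp\left(-k\,(t-\ln(1+t))\right)$, which is sub-exponential rather than sub-Gaussian for $t\gtrsim 1$, so the envelope $2\exp\left(-t^2 k/3\right)$ with the paper's exact constant is not obviously recoverable from crude Chernoff bounds in that regime—one would have to restrict the range of $t$ or degrade the constant. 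The paper sidesteps all of this by treating the one-point bound as defining the parameter (and downstream, \Cref{cl:bounderr} only ever uses the moderate-deviation regime), so if you carry your argument through you would be establishing a stronger and more honest version of the claim than the paper itself does.
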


\begin{proof}
 To estimate $i$-th marginal in \estimate, the subroutine \gbas takes $k \geq \frac{3n}{\gamma^2}\cdot \log\left(\frac{2n}{\delta'}\right)$ as input and estimates the $i$-th marginal. We have, from Theorem~\ref{thm:huber},
$\Pr(|err_i| > \gamma / \sqrt{n}) < \delta'/ n = 2 \exp\left(- (\gamma / \sqrt{n})^2 \cdot  \frac{n}{\gamma^2}\log (2n / \delta')\right)$.
So, $err_i \sim  \mathsf{SubG}\left(\frac{\gamma^2}{2 n \log (2n / \delta')}\right)$.
\end{proof}

\begin{cl}\label{cl:bounderr}
Consider the random variable $err=\sum_{i=1}^n err_i$. Then we have $\Pr(|err| > \gamma) \leq \frac{\delta'}{n}$.    
\end{cl}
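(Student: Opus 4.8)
The plan is to combine the SubGaussian bound from Claim~\ref{cl:errisubg} with the additivity of SubGaussian parameters (Lemma~\ref{lem:subg2}) and then apply the defining tail inequality of a SubGaussian random variable. First I would invoke Claim~\ref{cl:errisubg} to record that each $err_i \sim \mathsf{SubG}\!\left(\frac{\gamma^2}{2n \log(2n/\delta')}\right)$, and that the $err_i$ are independent (each is a deterministic function of the independent runs of \gbas on the $n$ different marginals; one should note this independence explicitly, since the calls to \gbas are on disjoint sets of fresh samples). Then Lemma~\ref{lem:subg2} gives $err = \sum_{i=1}^n err_i \sim \mathsf{SubG}\!\left(n \cdot \frac{\gamma^2}{2n \log(2n/\delta')}\right) = \mathsf{SubG}\!\left(\frac{\gamma^2}{2\log(2n/\delta')}\right)$.

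Next I would plug $t = \gamma$ and $\alpha^2 = \frac{\gamma^2}{2\log(2n/\delta')}$ into the SubGaussian tail bound of Definition~\ref{lem:subg}, namely $\Pr(|err| \geq t) \leq 2\exp(-t^2/2\alpha^2)$. This yields
\[
\Pr(|err| > \gamma) \leq 2\exp\!\left(-\frac{\gamma^2}{2 \cdot \frac{\gamma^2}{2\log(2n/\delta')}}\right) = 2\exp\!\left(-\log(2n/\delta')\right) = 2 \cdot \frac{\delta'}{2n} = \frac{\delta'}{n},
\]
which is exactly the claimed bound. The computation is routine once the SubGaussian parameter of the sum is correctly tracked.

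The only real subtlety — and the step I would be most careful about — is the independence of the $err_i$'s required to apply Lemma~\ref{lem:subg2}. One must check that in \estimate the estimate $\widehat{P_i}$ of the $i$-th conditional marginal is produced by a call to \gbas that uses its own independent stream of samples from $\igs(\widehat{\psi})$, so that $err_i$ depends only on those samples; since the conditioning string $x_1\ldots x_{i-1}$ is fixed (it is the coordinates of the already-drawn sample $x$, not random with respect to the \gbas randomness), the randomness feeding $err_i$ across different $i$ comes from disjoint, independent sample batches. I would state this observation briefly before applying Lemma~\ref{lem:subg2}. Everything else is a direct substitution into the two lemmas already available, so no further obstacle is expected.
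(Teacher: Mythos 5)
Your proposal is correct and follows essentially the same route as the paper: sum the SubGaussian parameters from Claim~\ref{cl:errisubg} via Lemma~\ref{lem:subg2} to get $err \sim \mathsf{SubG}\bigl(\gamma^2/(2\log(2n/\delta'))\bigr)$, then apply the tail bound in Definition~\ref{lem:subg} with $t=\gamma$ to obtain $\delta'/n$. Your explicit remark on the independence of the $err_i$'s (fresh sample batches per \gbas call, with the conditioning string fixed) is a point the paper leaves implicit, and is a welcome clarification rather than a deviation.
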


\begin{proof}
From Lemma~\ref{lem:subg2}, the total error in estimating is
$err = \sum_i err_i \sim \mathsf{SubG}\left(\frac{\gamma^2}{2 \log (2n / \delta')}\right)$. By \Cref{lem:subg}, we conclude that $\Pr(|err| > \gamma) \leq 2\exp(-\frac{\gamma^2 \cdot 2 \log (2n/\delta')}{\gamma^2}) = \frac{\delta'}{n}$.
\end{proof}
\noindent The following lemma argues the correctness of \estimate.
\begin{lem}[Correctness of \estimate]\label{bcl:estimatel1}
Fix $\gamma, \delta \in (0,1)$. Consider $\delta'= \frac{\delta}{2\alpha}$ for $\alpha=\frac{1}{2\zeta^2} \log \frac{4}{\delta}$, as fixed in the algorithms.
\estimate($P,x, \gamma/1.11,\delta'$) estimates $P(x)$ within a multiplicative factor of $(1 \pm \gamma)$ with probability at least $(1 - \delta')$, where the probability is taken over the internal randomness of \estimate. Moreover, the expected number of samples required by \estimate is $\Oh\left(\frac{n^2}{\gamma^2} \cdot \log\left(\frac{2n}{\delta'}\right)\right)$.
\end{lem}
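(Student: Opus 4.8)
The plan is to bound the error in $\estimate$ by combining the per-marginal multiplicative guarantees from $\gbas$ (Theorem~\ref{thm:huber}) with the SubGaussian tail bound for the aggregate error. First I would recall that $\estimate$ computes $\widehat{P}(x) = \prod_{i=1}^n \widehat{P_i}$ where each $\widehat{P_i}$ is the $\gbas$ estimate of the conditional marginal $P_i = P(x_i \mid x_1\cdots x_{i-1})$, and that self-reducibility of $\igs$ is exactly what lets $\subcond$ realize the conditioned sampler so that $\gbas$ can be invoked on $P\mid_{x_1\cdots x_{i-1}}$. Using the chain rule $P(x) = \prod_{i=1}^n P_i$, the ratio $\widehat{P}(x)/P(x) = \prod_{i=1}^n (\widehat{P_i}/P_i) = \prod_{i=1}^n (1 + \xi_i)$ where $|\xi_i| = err_i$ as defined in \eqref{eqn:errori}.

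Next I would translate the additive bound on $err = \sum_i err_i$ from Claim~\ref{cl:bounderr} (which gives $\Pr(|err| > \gamma) \le \delta'/n$, even with $\gamma$ replaced by $\gamma/1.11$ up to adjusting constants) into a multiplicative bound on the product. The elementary inequalities $\prod_i(1+\xi_i) \le \exp(\sum_i |\xi_i|) = e^{err}$ and $\prod_i (1 + \xi_i) \ge \prod_i (1 - |\xi_i|) \ge 1 - \sum_i |\xi_i| = 1 - err$ reduce everything to controlling $err$. On the event $\{err \le \gamma/1.11\}$ (the parametrization $\gamma/1.11$ is chosen precisely so the exponential slack is absorbed), one gets $e^{err} \le e^{\gamma/1.11} \le 1 + \gamma$ for $\gamma \in (0,1)$ since $e^{t} \le 1 + 1.11\,t$ on a suitable range, and $1 - err \ge 1 - \gamma/1.11 \ge 1-\gamma$; hence $(1-\gamma)P(x) \le \widehat{P}(x) \le (1+\gamma)P(x)$. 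The failure probability is at most $\delta'/n \le \delta'$, which gives the stated confidence; alternatively one can take a cruder union bound over the $n$ calls to $\gbas$, each failing with probability $\le \delta'/n$, but the SubGaussian route via Claim~\ref{cl:bounderr} is tighter and is the one already set up.

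For the sample complexity, I would invoke part (ii) of Theorem~\ref{thm:huber}: the $i$-th call to $\gbas$ takes $k \ge \frac{3n}{\gamma^2}\log(\frac{2n}{\delta'})$ as its target parameter and consumes $\E[T_i] = k/P_i$ samples in expectation. Since $\sum_i 1/P_i$ can be large in the worst case, the cleaner accounting is that each $\gbas$ run draws samples until it collects $k$ "heads"; by linearity over the $n$ indices and the fact that $k = \Theta\!\big(\frac{n}{\gamma^2}\log\frac{2n}{\delta'}\big)$, the total expected sample count is $\Oh\!\big(\sum_{i=1}^n k/P_i\big)$, and under the normalization used here this collapses to $\Oh\!\big(\frac{n^2}{\gamma^2}\log\frac{2n}{\delta'}\big) = \widetilde{\Oh}(n^2/\gamma^2)$ as claimed. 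The main obstacle I anticipate is making the constant bookkeeping around the $1.11$ factor rigorous — i.e., verifying that replacing $\gamma$ by $\gamma/1.11$ in $\gbas$'s error parameter, then exponentiating the additive SubGaussian bound, genuinely lands inside the multiplicative window $(1\pm\gamma)$ for all $\gamma \in (0,1)$ — and, secondarily, being careful that the $\E[T] = k/p$ guarantee composes correctly when $p = P_i$ itself varies with the (fixed) string $x$ but the expectation is only over $\gbas$'s internal randomness.
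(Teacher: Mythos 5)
Your correctness half follows essentially the paper's own route: chain rule over the conditional marginals, per\--marginal multiplicative guarantees from \gbas at scale $\gamma/\sqrt{n}$, SubGaussian aggregation via \Cref{cl:bounderr}, and elementary inequalities to turn the additive bound $err\le\gamma/1.11$ into the window $(1\pm\gamma)$. Your lower bound via $\prod_i(1-|\xi_i|)\ge 1-\sum_i|\xi_i|$ is in fact a little cleaner than the paper's $1-err_i\ge e^{-1.11\,err_i}$ step, and the constant caveat you flag (that $e^{\gamma/1.11}\le 1+\gamma$ only holds for small $\gamma$, roughly $\gamma\lesssim 0.2$, not all of $(0,1)$) is one the paper's proof shares, since its claim $e^x\le 1+1.11x$ on $(0,1)$ is likewise only valid near $0$. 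So on correctness you are aligned with the paper.

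The genuine gap is in the sample-complexity half. You correctly reduce to $\E[T_i]=k/P_i$ per coordinate and even note that $\sum_i 1/P_i$ can be large, but then assert that ``under the normalization used here'' $\Oh\bigl(\sum_i k/P_i\bigr)$ collapses to $\Oh(nk)$ --- no such normalization exists for a fixed string $x$: if some conditional marginal $P_i=P(x_i\mid x_1\cdots x_{i-1})$ is tiny, the $i$-th \gbas call genuinely consumes $k/P_i$ samples in expectation, and the claimed bound fails. The paper resolves this by averaging over the bit $x_i$ itself: in \gst the string $x$ is drawn from $P$, so given the prefix, $x_i=b$ with probability $P(x_i=b\mid x_1\cdots x_{i-1})$, and the expected cost of the $i$-th call is $k\cdot\bigl(P_{x_i=0}\cdot\tfrac{1}{P_{x_i=0}}+P_{x_i=1}\cdot\tfrac{1}{P_{x_i=1}}\bigr)=2k$, independent of how small either marginal is; summing over coordinates gives $\Oh(nk)=\Oh\bigl(\tfrac{n^2}{\gamma^2}\log\tfrac{2n}{\delta'}\bigr)$. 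Your closing remark shows you sensed exactly this tension (the expectation being ``only over \gbas's internal randomness'' with $x$ fixed), but the write-up leaves it unresolved; to complete the proof you must interpret the expected sample count as also averaging over $x\sim P$ (which is how \estimate is invoked by \gst) and carry out this cancellation, or else settle for a bound that degrades with $\min_i P_i$.
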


\begin{proof}

From the correctness of \gbas, we know that with probability at least $\delta'/n$, the following holds for every $i \in [n]$:
\begin{equation}\label{eqn:estPxi}
\size{\widehat{P_{i}}- P(x_i | x_1...x_{i-1})} \leq \frac{\gamma}{\sqrt{n}} \cdot P(x_i | x_1...x_{i-1})
\end{equation}
Therefore, from Equation~\ref{eqn:errori}, $err_i > \frac{\gamma}{\sqrt{n}}$. Taking product of above inequalities over all $i\in[n]$
\begin{equation}\label{beqn:estPxexact}
\Pi_{i=1}^n\left(1-err_i\right) P(x) \leq \widehat{P}(x) \leq \Pi_{i=1}^n\left(1+err_i\right) P(x)    
\end{equation}

Assuming $err_i \in (0,0.8)$, we know that $1-err_i \geq e^{-1.11\cdot err_i}$~\footnote{We will be using $1.11$ as the exponent in the approximation as compared to the more commonly used exponent $2$. This is done in order to obtain better sample complexity in our experiments.}. Moreover, $(1+x) \leq e^x$. Thus from the above expression, we obtain a multiplicative estimate of the probability mass $P(x)$ as follows:
\begin{equation*}
e^{-\sum_{i=1}^n 1.11 \cdot err_i} P(x) \leq \widehat{P}(x) \leq e^{\sum_{i=1}^n err_i} P(x)    
\end{equation*}

Since $\sum_{i=1}^n err_i=err$, we can write the above as 
\begin{equation}\label{beqn:estPx}
e^{- 1.11 \cdot err} P(x) \leq \widehat{P}(x) \leq e^{err} P(x)    
\end{equation}

Now note that in the interval $x \in (0,1)$, $e^{x} \leq (1+1.11x)$ and $1-1.11x \leq e^{-1.11x}$. From \Cref{cl:bounderr}, we know that $err \leq \gamma$ with high probability. Thus we obtain the following,
\begin{equation}\label{beqn:estP'x}
(1-1.11\gamma) P(x) \leq \widehat{P}(x) \leq (1+1.11\gamma) P(x)    
\end{equation}

Since we are calling $\estimate(P,x,\gamma/1.11,\delta')$, combining the above with \Cref{beqn:estPx}, we obtain a multiplicative estimate of the probability mass $P(x)$ as follows:
\begin{equation}\label{beqn:estP'xapprox}
(1-\gamma) P(x) \leq \widehat{P}(x) \leq (1+\gamma) P(x)    
\end{equation}

To establish correctness, recall that for any index $i \in [n]$, \Cref{eqn:estPxi} fails to hold with probability at most $\delta'/n$. Taking union bound, the error in estimating one of the $n$ marginals in \Cref{beqn:estPx} is at most $\delta'$.

Finally we establish the sample complexity of \estimate. To do this, we first determine the number of samples drawn by \gbas to estimate one marginal. Suppose $\widehat{P_{i}}$ denotes the probability mass $P_i = P(x_i \mid x_1...x_{i-1})$. From Theorem~\ref{thm:huber}, the expected number of samples taken by \gbas is given by
\begin{align*}
  &\E \left(\frac{3 \times 1.232 n}{\gamma^2} \cdot \log \left(\frac{2n}{\delta'}\right)\cdot\frac{1}{\widehat{P_{i}}}\right)\\
  \leq& \E \left(\frac{4 n}{\gamma^2} \cdot \log \left(\frac{2n}{\delta'}\right)\cdot\frac{1}{\widehat{P_{i}}}\right)\\ 
  =& \frac{4 n}{\gamma^2} \cdot \log \left(\frac{2n}{\delta'}\right)\left(\widehat{P}_{x_i=0}\cdot\frac{1}{\widehat{P}_{x_i=0}}+ \widehat{P}_{x_i=1}\cdot\frac{1}{\widehat{P}_{x_i=1}}\right)\\
  =& \frac{8n}{\gamma^2} \cdot \log \left(\frac{2n}{\delta'}\right)
\end{align*}

As the \estimate calls the subroutine \gbas $n$ times, therefore the total expected sample complexity is given by 
\begin{align*}
\frac{8 n^2}{\gamma^2} \cdot \log \left(\frac{2n}{\delta'}\right)    
\end{align*}
This concludes the proof of the claim.

\end{proof}

\begin{cl}\label{bclaim:estclosel1}
Let $\widehat{P}(x)$ be the estimate of $P(x)$ such that $(1-\gamma)P(x) \leq \widehat{P}(x) \leq (1+\gamma)P(x)$ holds. Then $\E\limits_{x \sim P}\left[\size{\frac{Q(x)}{P(x)} - \frac{Q(x)}{\widehat{P}(x)}}\right] \leq \frac{\gamma}{1 - \gamma}$.
\end{cl}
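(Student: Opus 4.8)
The plan is to bound the expectation pointwise. For a fixed $x$ in the support of $P$, I would factor out $Q(x)/P(x)$ and write
\[
\size{\frac{Q(x)}{P(x)} - \frac{Q(x)}{\widehat P(x)}} = \frac{Q(x)}{P(x)}\cdot \size{1 - \frac{P(x)}{\widehat P(x)}}.
\]
The hypothesis $(1-\gamma)P(x)\le \widehat P(x)\le(1+\gamma)P(x)$ gives $\frac{1}{1+\gamma}\le \frac{P(x)}{\widehat P(x)}\le \frac{1}{1-\gamma}$, so the quantity $\size{1 - P(x)/\widehat P(x)}$ is at most $\max\!\left(\frac{1}{1-\gamma}-1,\ 1-\frac{1}{1+\gamma}\right) = \frac{\gamma}{1-\gamma}$ (the left term dominates since $\gamma\in(0,1)$). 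Hence for every such $x$,
\[
\size{\frac{Q(x)}{P(x)} - \frac{Q(x)}{\widehat P(x)}} \le \frac{\gamma}{1-\gamma}\cdot \frac{Q(x)}{P(x)}.
\]

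Next I would take the expectation over $x\sim P$. Since $\E_{x\sim P}\!\left[\frac{Q(x)}{P(x)}\right] = \sum_{x\in \mathrm{supp}(P)} P(x)\cdot\frac{Q(x)}{P(x)} = \sum_{x\in\mathrm{supp}(P)} Q(x) \le 1$, linearity and monotonicity of expectation give
\[
\E_{x\sim P}\!\left[\size{\frac{Q(x)}{P(x)} - \frac{Q(x)}{\widehat P(x)}}\right] \le \frac{\gamma}{1-\gamma}\cdot \E_{x\sim P}\!\left[\frac{Q(x)}{P(x)}\right] \le \frac{\gamma}{1-\gamma},
\]
which is exactly the claimed bound.

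I do not anticipate a genuine obstacle here; the only points requiring minor care are (i) restricting all sums to the support of $P$ so that the ratios are well defined, and (ii) checking which of the two one-sided bounds on $\size{1-P(x)/\widehat P(x)}$ is larger — and confirming $\frac{\gamma}{1-\gamma}$ is indeed the maximum for $\gamma\in(0,1)$, which is immediate since $\frac{1}{1-\gamma}-1 = \frac{\gamma}{1-\gamma} \ge \frac{\gamma}{1+\gamma} = 1-\frac{1}{1+\gamma}$. One implicit assumption worth flagging is that $\widehat P(x) > 0$ whenever $P(x)>0$, which follows from the lower bound $\widehat P(x)\ge (1-\gamma)P(x)$ with $\gamma<1$.
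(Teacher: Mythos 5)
Your proposal is correct and follows essentially the same route as the paper's proof: a pointwise bound of $\size{\frac{Q(x)}{P(x)} - \frac{Q(x)}{\widehat P(x)}}$ by $\frac{\gamma}{1-\gamma}\cdot\frac{Q(x)}{P(x)}$ derived from the two-sided multiplicative hypothesis, followed by taking the expectation and using $\E_{x\sim P}\left[\frac{Q(x)}{P(x)}\right]\le 1$ (the paper writes this as an equality, so your restriction to the support of $P$ is if anything slightly more careful). The only cosmetic difference is that you factor out $\frac{Q(x)}{P(x)}$ before bounding, whereas the paper bounds $\size{\frac{1}{P(x)}-\frac{1}{\widehat P(x)}}$ directly and then multiplies by $Q(x)$; these are the same estimate.
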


\begin{proof}
We know that $(1-\gamma)P(x) \leq \widehat{P}(x) \leq (1+\gamma)P(x)$. Hence,
\begin{eqnarray*}
 \frac{1}{(1+\gamma)P(x)} \leq \frac{1}{\widehat{P}(x)} \leq \frac{1}{(1-\gamma)P(x)} 
\end{eqnarray*}

 Thus we have, 
\begin{eqnarray*}
\frac{1}{P(x)} - \frac{1}{(1-\gamma)P(x)} 
& \leq & \frac{1}{P(x)} -  \frac{1}{\widehat{P}(x)} \\
&\leq & \frac{1}{P(x)} - \frac{1}{(1+\gamma)P(x)} 
\end{eqnarray*}

Therefore, 
\begin{eqnarray*}
&&\size{\frac{1}{P(x)} -  \frac{1}{\widehat{P}(x)}} \\
&\leq & \max \left\{\frac{\gamma}{(1-\gamma) P(x)}, \frac{\gamma}{(1+ \gamma)P(x)}\right\} \\
&= &\frac{\gamma}{(1-\gamma) P(x)}
\end{eqnarray*}

Therefore, by multiplying $Q(x)$ on both sides and taking expectation we have, 
\begin{eqnarray*}
\E\limits_{x \sim P}\left[\size{\frac{Q(x)}{P(x)} -  \frac{Q(x)}{\widehat{P}(x)}}\right] \leq \frac{\gamma}{1-\gamma}\cdot \E\limits_{x \sim P}\left[\frac{Q(x)}{P(x)}\right] = \frac{\gamma}{1-\gamma}.
\end{eqnarray*}

This completes the proof of the claim.
\end{proof}

\begin{lem}\label{lem:dtvexpl1}
$\dtv(P,Q) = \E\limits_{x \sim P}\max \left( 0, 1- \frac{Q(x)}{P(x)} \right) = \E\limits_{y \sim Q}\max \left( 0, 1- \frac{P(y)}{Q(y)} \right)$.    
\end{lem}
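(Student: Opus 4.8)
The plan is to verify the standard identity relating total variation distance to the positive part of $1 - Q(x)/P(x)$ averaged over $P$, and its symmetric counterpart averaged over $Q$. First I would recall that by \Cref{defi:tvd}, $\dtv(P,Q) = \max_{A \subseteq \{0,1\}^n} (P(A) - Q(A))$, and that this maximum is attained by the set $A^\ast = \{x : P(x) > Q(x)\}$, since including any $x$ with $P(x) > Q(x)$ increases $P(A) - Q(A)$ and including any $x$ with $P(x) \le Q(x)$ does not help. Hence $\dtv(P,Q) = \sum_{x \in A^\ast} (P(x) - Q(x)) = \sum_{x : P(x) > Q(x)} \bigl(P(x) - Q(x)\bigr)$.

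Next I would rewrite this sum as an expectation over $P$. For $x$ with $P(x) > 0$, write $P(x) - Q(x) = P(x)\bigl(1 - Q(x)/P(x)\bigr)$, and note $1 - Q(x)/P(x) > 0$ exactly when $P(x) > Q(x)$, while for $x$ with $P(x) \le Q(x)$ (including $P(x) = 0$) the quantity $\max(0, 1 - Q(x)/P(x))$ contributes nothing — either because the $\max$ is $0$, or because such $x$ carries no $P$-mass. Therefore
\begin{equation*}
\sum_{x : P(x) > Q(x)} \bigl(P(x) - Q(x)\bigr) = \sum_{x} P(x) \max\left(0, 1 - \frac{Q(x)}{P(x)}\right) = \E_{x \sim P}\max\left(0, 1 - \frac{Q(x)}{P(x)}\right),
\end{equation*}
which gives the first equality. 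For the second equality I would use the symmetry $\dtv(P,Q) = \dtv(Q,P) = \max_A (Q(A) - P(A))$, now maximized by $\{y : Q(y) > P(y)\}$, and repeat the identical argument with the roles of $P$ and $Q$ swapped, obtaining $\dtv(P,Q) = \E_{y \sim Q}\max(0, 1 - P(y)/Q(y))$.

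This lemma is essentially a folklore fact, so I do not expect a genuine obstacle; the only point requiring a little care is the handling of points where $P(x) = 0$ but $Q(x) > 0$ (where $Q(x)/P(x)$ is formally undefined), which is resolved by observing that the expectation over $P$ assigns such points zero weight, so the convention for the ratio there is irrelevant. An alternative, perhaps cleaner, route is to start from the well-known identity $\dtv(P,Q) = \tfrac{1}{2}\sum_x |P(x) - Q(x)| = \sum_x (P(x) - Q(x))^+$ where $(\cdot)^+$ denotes positive part, and then factor out $P(x)$ (respectively $Q(x)$) from each summand; I would present whichever of the two derivations is shortest.
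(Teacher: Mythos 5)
Your proposal is correct and follows essentially the same route as the paper's proof: identify the maximizing set $\{x : P(x) > Q(x)\}$, factor out $P(x)$ to turn the sum into $\E_{x \sim P}\max\left(0, 1 - \frac{Q(x)}{P(x)}\right)$, and invoke symmetry for the expectation over $Q$. Your extra care about points with $P(x)=0$ is a fine addition but does not change the argument.
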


\begin{proof}
From the definitions we have
\begin{eqnarray*}
\dtv(P,Q) &=& \sum_{x : P(x) > Q(x)} (P(x) - Q(x)) \\ &=& \sum_{x : P(x) > Q(x)} P(x) \left(1 - \frac{Q(x)}{P(x)}\right) \\ &=& \sum_{x} P(x) \max \left\{0, 1 - \frac{Q(x)}{P(x)}\right\}\\ 
&=& \E_{x \sim P} \max \left\{0, 1- \frac{Q(x)}{P(x)}\right\} 
\end{eqnarray*}
Similarly, we can also say that $ \dtv(P,Q) = \E\limits_{y \sim Q}\max \left( 0, 1- \frac{P(y)}{Q(y)} \right)$.
\end{proof}

\subsection{Correctness of \gst}

\noindent Now we are ready to prove the correctness of \gst. 
\begin{lem}
Let $v$ be the output of \gst($\igs, \iws, \psi, \zeta, \delta$). It holds, with probability at least 1-$\delta$ that
    $$ \dtvp(\igs,\iws) - \zeta \leq v \leq \dtvp(\igs,\iws) + \zeta $$
\end{lem}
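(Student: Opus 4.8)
The plan is to combine three ingredients already developed: the characterization of $\dtv$ in \Cref{lem:dtvexpl1}, the per-sample accuracy of \estimate from \Cref{bcl:estimatel1} together with the distortion bound in \Cref{bclaim:estclosel1}, and a Chernoff/Hoeffding concentration argument over the $\alpha$ samples in $S$. Write $P = \cD^{\igs,\psi}$, $Q = \cD^{\iws,\psi}$, and for each $x \in S$ let $Z_x = \max\!\left(0,\, 1 - \frac{Q(x)}{\estimate(\igs,\psi,n,x,\gamma,\delta')}\right)$ be the term accumulated into $val$, so that the output is $v = \frac{1}{\alpha}\sum_{x \in S} Z_x$. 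The target quantity is $D := \dtvp(\igs,\iws) = \E_{x\sim P}\max(0, 1-\tfrac{Q(x)}{P(x)})$ by \Cref{lem:dtvexpl1}.

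First I would control the bias introduced by using $\widehat P(x)$ in place of $P(x)$. Condition on the (high-probability) event that \estimate returns a value within a $(1\pm\gamma)$ multiplicative factor of $P(x)$; by \Cref{bcl:estimatel1} this holds for a fixed $x$ with probability $\ge 1-\delta'$. On this event, since $t\mapsto\max(0,1-t)$ is $1$-Lipschitz, $\bigl|Z_x - \max(0,1-\tfrac{Q(x)}{P(x)})\bigr| \le \bigl|\tfrac{Q(x)}{P(x)} - \tfrac{Q(x)}{\widehat P(x)}\bigr|$, and \Cref{bclaim:estclosel1} gives $\E_{x\sim P}$ of the right-hand side at most $\tfrac{\gamma}{1-\gamma}$. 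Hence $\bigl|\E[Z_x] - D\bigr| \le \tfrac{\gamma}{1-\gamma} + (\text{failure contribution})$; with $\gamma = \tfrac{\zeta}{1.11(2+\zeta)}$ one checks $\tfrac{\gamma}{1-\gamma} \le \zeta/2$ (this is the reason for that particular choice of $\gamma$), and the $\delta'$-failure events contribute a further at most $\delta' n$ or so, absorbed into the remaining $\zeta/2$ budget after a union bound over the $\le \alpha n$ invocations of \gbas. So the expectation of $v$ is within $\zeta/2$ of $D$ (up to the global failure probability).

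Next I would apply concentration. The $Z_x$ are i.i.d. (across $x\in S$), bounded in $[0,1]$, and $v$ is their empirical mean over $\alpha$ samples. By \Cref{lem:cher_bound2} (Chernoff–Hoeffding), $\Pr\bigl(|v - \E[v]| > \zeta/2\bigr) \le 2\exp(-2(\zeta/2)^2\alpha) = 2\exp(-\alpha\zeta^2/2)$; with $\alpha = \tfrac{2}{\zeta^2}\log\tfrac{4}{\delta}$ this is at most $\delta/2$. Combining the deterministic bias bound with this deviation bound, and a union bound with the $\le \delta/2$ total probability that any \estimate call fails its multiplicative guarantee (recall $\delta' = \delta/(2\alpha)$, so $\alpha\cdot\delta' = \delta/2$), yields $|v - D| \le \zeta$ with probability at least $1-\delta$, as claimed.

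The main obstacle is bookkeeping the error budget cleanly: one must verify that the algorithm's parameter settings ($\alpha$, $\gamma$, $\delta'$) really do split the $\zeta$ tolerance into a bias half and a fluctuation half, and the $\delta$ confidence into the estimation-failure half and the concentration-failure half — in particular checking $\tfrac{\gamma}{1-\gamma}\le\zeta/2$ for the stated $\gamma$, and that the contribution of \estimate failures to $\E[Z_x]$ (where $Z_x$ could be as large as $1$ but $Q(x)$-weighted events are rare) is genuinely negligible. A minor subtlety is that conditioning on the success event changes the distribution of $x$ slightly; this is handled by the standard trick of bounding $|\E[Z_x] - D|$ via $\E|Z_x - \max(0,1-Q(x)/P(x))| + \Pr[\text{fail}]\cdot 1$ without conditioning. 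None of these steps is deep; the work is in assembling the constants.
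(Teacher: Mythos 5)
Your proposal is correct and takes essentially the same route as the paper's own proof: rewrite $\dtvp(\igs,\iws)$ via \Cref{lem:dtvexpl1}, bound the per-sample bias by $\gamma/(1-\gamma)\le\zeta/2$ using \Cref{bcl:estimatel1} together with \Cref{bclaim:estclosel1}, then apply Chernoff--Hoeffding over the $\alpha$ i.i.d.\ terms (deviation $\zeta/2$ with failure probability $\delta/2$) and a union bound over the $\alpha$ calls to \estimate (using $\alpha\delta'=\delta/2$). The only cosmetic difference is that you get the pointwise bound $\bigl|\max(0,1-Q(x)/\widehat P(x))-\max(0,1-Q(x)/P(x))\bigr|\le\bigl|Q(x)/P(x)-Q(x)/\widehat P(x)\bigr|$ from the $1$-Lipschitzness of $t\mapsto\max(0,1-t)$, where the paper derives it by a six-case analysis, and you explicitly note (and correctly handle) the conditioning subtlety that the paper passes over.
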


\begin{proof}
From the description of Algorithm \gst (\Cref{balg:indentl1}), $v=val/\alpha$, where $val = \sum_{i=1}^\alpha \max\left(0, 1 - \frac{Q(x)}{\widehat{P}(x)}\right)$. Recall, $\widehat{P}(x)$ denotes the estimate of $P(x)$ as returned by $\estimate$. 

 Consider the event $\cA$:
$$\cA:=\estimate \ \mbox{outputs a correct estimate of} \ P(x)$$
From \Cref{bcl:estimatel1}, we know that $\Pr(\cA)=1-\delta'$, where $\delta'=\delta/2\alpha$. From now on, we work in the  conditional space that the even $\cA$ holds.

To prove the lemma, we need to prove the following:
$$ \dtvp(\igs,\iws) - \zeta \leq \frac{val}{\alpha} \leq \dtvp(\igs,\iws) + \zeta $$

\textbf{Case 1 } Suppose $Q(x) < P(x) < \widehat{P}(x)$. Then we have $\frac{Q(x)}{\widehat{P}(x)} < \frac{Q(x)}{P(x)} < 1$. Therefore using the triangle inequality, we can say the following: 
\begin{eqnarray*}
 \size{1 - \frac{Q(x)}{\widehat{P}(x)}} &\leq& \size{1 - \frac{Q(x)}{P(x)}} + \size{\frac{Q(x)}{P(x)} - \frac{Q(x)}{\widehat{P}(x)}} \\
  1 - \frac{Q(x)}{\widehat{P}(x)} &\leq& 1 - \frac{Q(x)}{P(x)} + \size{\frac{Q(x)}{P(x)} - \frac{Q(x)}{\widehat{P}(x)}} \\
 \max\left(0, 1 - \frac{Q(x)}{\widehat{P}(x)}\right) &\leq& \max\left(0, 1 - \frac{Q(x)}{P(x)}\right) \\ \hspace{50pt} \ &+& \size{\frac{Q(x)}{P(x)} - \frac{Q(x)}{\widehat{P}(x)}}
\end{eqnarray*}  

\textbf{Case 2 } Suppose $Q(x) < \widehat{P}(x) < P(x)$. Then we have $\frac{Q(x)}{P(x)} < \frac{Q(x)}{\widehat{P}(x)} < 1$. Therefore using the same triangle inequality we have: 
\begin{eqnarray*}
 \size{1 - \frac{Q(x)}{\widehat{P}(x)}} &\leq& \size{1 - \frac{Q(x)}{P(x)}} + \size{\frac{Q(x)}{P(x)} - \frac{Q(x)}{\widehat{P}(x)}} \\
  1 - \frac{Q(x)}{\widehat{P}(x)} &\leq& 1 - \frac{Q(x)}{P(x)} + \size{\frac{Q(x)}{P(x)} - \frac{Q(x)}{\widehat{P}(x)}} \\
 \max\left(0, 1 - \frac{Q(x)}{\widehat{P}(x)}\right) &\leq& \max\left(0, 1 - \frac{Q(x)}{P(x)}\right) \\ &+& \size{\frac{Q(x)}{P(x)} - \frac{Q(x)}{\widehat{P}(x)}}
\end{eqnarray*}  

\textbf{Case 3 } Suppose $\widehat{P}(x) < P(x) < Q(x)$ . Then using the facts that $\frac{Q(x)}{\widehat{P}(x)} > 1$ and $\frac{Q(x)}{P(x)} > 1$ and $\size{\frac{Q(x)}{P(x)} - \frac{Q(x)}{\widehat{P}(x)}} \geq 0$, we have:
\begin{eqnarray*}
  \max\left(0, 1 - \frac{Q(x)}{\widehat{P}(x)}\right) &\leq& \max\left(0, 1 - \frac{Q(x)}{P(x)}\right) \\ &+& \size{\frac{Q(x)}{P(x)} - \frac{Q(x)}{\widehat{P}(x)}}
\end{eqnarray*}  

\textbf{Case 4 } Suppose $P(x) < \widehat{P}(x) < Q(x)$ . Follows from the above case.

\textbf{Case 5 } Suppose $P(x) < Q(x) < \widehat{P}(x)$. Then using the fact $\frac{Q(x)}{P(x)} > 1$ and $\frac{Q(x)}{\widehat{P}(x)} < 1$, we have: 
\begin{eqnarray*}
\size{1 - \frac{Q(x)}{\widehat{P}(x)}} &\leq& \size{\frac{Q(x)}{P(x)} - \frac{Q(x)}{\widehat{P}(x)}} \\
1 - \frac{Q(x)}{\widehat{P}(x)} &\leq& 0 + \size{\frac{Q(x)}{P(x)} - \frac{Q(x)}{\widehat{P}(x)}} \\
 \max\left(0, 1 - \frac{Q(x)}{\widehat{P}(x)}\right) &\leq& \max\left(0, 1 - \frac{Q(x)}{P(x)}\right) \\ &+& \size{\frac{Q(x)}{P(x)} - \frac{Q(x)}{\widehat{P}(x)}}
\end{eqnarray*}  

\textbf{Case 6 } Suppose $\widehat{P}(x) < Q(x) < P(x)$. Then using the fact $\frac{Q(x)}{P(x)} < 1$ and $\frac{Q(x)}{\widehat{P}(x)} > 1$, we can say that:
\begin{eqnarray*}
0 &\leq& \left(1 - \frac{Q(x)}{P(x)}\right) + \size{\frac{Q(x)}{P(x)} - \frac{Q(x)}{\widehat{P}(x)}} 
\end{eqnarray*}
\begin{eqnarray*}
\max\left(0, 1 - \frac{Q(x)}{\widehat{P}(x)}\right) 
&\leq& \max\left(0, 1 - \frac{Q(x)}{P(x)}\right) \\ &+& \size{\frac{Q(x)}{P(x)} - \frac{Q(x)}{\widehat{P}(x)}}
\end{eqnarray*}  

Therefore for all $x$ we have,
\begin{eqnarray*}
\max\left(0, 1 - \frac{Q(x)}{\widehat{P}(x)}\right) &\leq& \max\left(0, 1 - \frac{Q(x)}{P(x)}\right) \\ &+& \size{\frac{Q(x)}{P(x)} - \frac{Q(x)}{\widehat{P}(x)}} 
\end{eqnarray*}

Similarly, we also can show the following:
\begin{eqnarray*}
\max\left(0, 1 - \frac{Q(x)}{P(x)}\right) &-& \size{\frac{Q(x)}{P(x)} - \frac{Q(x)}{\widehat{P}(x)}} \\ &\leq& \max\left(0, 1 - \frac{Q(x)}{\widehat{P}(x)}\right)
\end{eqnarray*}

Therefore taking expectation with respect to $x$ sampled from $P$ and using \Cref{bclaim:estclosel1}, we have the following inequality:
\begin{eqnarray*}
\dtv(P, Q) - \frac{\gamma}{1-\gamma} &\leq& \E\limits_{x \sim P}\left[\max\left(0, 1 - \frac{Q(x)}{\widehat{P}(x)}\right)\right] \\ &\leq& \dtv(P, Q) + \frac{\gamma}{1-\gamma}    \\
\dtv(P, Q) - \frac{\zeta}{2} &\leq& \E\limits_{x \sim P}\left[\max\left(0, 1 - \frac{Q(x)}{\widehat{P}(x)}\right)\right] \\ &\leq& \dtv(P, Q) + \frac{\zeta}{2} 
\end{eqnarray*}

the last inequality follows from $\zeta = \frac{2\gamma}{1 - \gamma}$. Recall that $\alpha =\frac{2}{ \zeta^2} \log \frac{4}{\delta} $. Therefore using Chernoff-Hoeffding bound we have, 
\begin{eqnarray*}
    \Pr\left(\frac{val}{\alpha} > \dtv(P, Q) + \zeta \right) 
    &\leq& \exp\left(-\frac{\zeta^2 \alpha^2 }{2\alpha}\right) \\ 
    &\leq& \frac{\delta}{4}
\end{eqnarray*}

Similarly, we can prove the following:
$$\Pr \left(\frac{val}{\alpha} < \dtv(P, Q) - \zeta\right) \leq \frac{\delta}{4}$$

Thus, under the condition that the event $\cA$ holds, combining the above, we conclude that our algorithm outputs $\frac{val}{\alpha} \in [\dtv(P, Q) - \zeta, \dtv(P, Q) + \zeta]$ with probability at least $1- \frac{\delta}{2}$. 

What is left is to bound the probability that $\cA$ does not hold for some $x\in S$. Recall, from \Cref{bcl:estimatel1} for each $x$, it holds that $\Pr(\neg\cA) \leq \delta/2\alpha$. As the subroutine \estimate is called for $\alpha$ times, the probability that $\cA$ does not hold for some $x\in S$ is at most $\delta/2$. Thus the total error of algorithm \gst is bounded by $\delta$.  
\end{proof}

\subsection{Sample Complexity}

\begin{lem}\label{lem:sampleest}
The total sample complexity of \gst is $\Oh\left(\frac{n^2 (2+\zeta)^2}{\zeta^4} \cdot \log \left(\frac{4n \alpha}{\delta}\right) \log \frac{4}{\delta} \right)$ where $\alpha=\frac{1}{2\zeta^2} \log \frac{4}{\delta}$.
\end{lem}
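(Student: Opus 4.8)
The plan is to decompose the sample usage of \gst into its two sources and bound each against the per-call cost already established in \Cref{bcl:estimatel1}. First, in \Cref{line:toltester:line8} the algorithm draws $\alpha$ i.i.d.\ samples directly from $\igs(\psi)$. Second, inside the for-loop over $S$ starting at \Cref{line:toltester:forloop:line10}, for each of the $\alpha$ strings $x\in S$ the subroutine \estimate is invoked exactly once, and by \Cref{bcl:estimatel1} each such call consumes, in expectation, $\Oh\!\left(\frac{n^2}{\gamma^2}\cdot\log\!\left(\frac{2n}{\delta'}\right)\right)$ subcube-conditional samples (arising from the $n$ internal \gbas calls). Hence, by linearity of expectation, the total expected sample complexity is
\[
\alpha + \alpha\cdot\Oh\!\left(\frac{n^2}{\gamma^2}\cdot\log\!\left(\frac{2n}{\delta'}\right)\right)=\Oh\!\left(\alpha\cdot\frac{n^2}{\gamma^2}\cdot\log\!\left(\frac{2n}{\delta'}\right)\right),
\]
the $\alpha$ direct draws being absorbed into the second term.

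It then remains to substitute the parameter settings fixed in \Cref{line:toltester:line1}--\Cref{line:toltester:line3}. From $\gamma=\frac{\zeta}{1.11(2+\zeta)}$ we get $\frac{1}{\gamma^2}=\frac{1.11^2(2+\zeta)^2}{\zeta^2}=\Oh\!\left(\frac{(2+\zeta)^2}{\zeta^2}\right)$; from $\delta'=\delta/2\alpha$ we get $\log\!\left(\frac{2n}{\delta'}\right)=\log\!\left(\frac{4n\alpha}{\delta}\right)$; and $\alpha=\Theta\!\left(\frac{1}{\zeta^2}\log\frac{4}{\delta}\right)$. Plugging these in yields
\[
\Oh\!\left(\frac{1}{\zeta^2}\log\frac{4}{\delta}\cdot\frac{n^2(2+\zeta)^2}{\zeta^2}\cdot\log\!\left(\frac{4n\alpha}{\delta}\right)\right)=\Oh\!\left(\frac{n^2(2+\zeta)^2}{\zeta^4}\cdot\log\!\left(\frac{4n\alpha}{\delta}\right)\log\frac{4}{\delta}\right),
\]
which is exactly the claimed bound; since $(2+\zeta)^2=\Oh(1)$ and $\log(4n\alpha/\delta)$ is polylogarithmic in $n$, $1/\zeta$, $1/\delta$, this simplifies to $\widetilde{\Oh}(n^2/\zeta^4)$, consistent with \Cref{thm:main}.

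The only point that deserves care is that \Cref{bcl:estimatel1} bounds the \emph{expected} number of samples per \estimate call --- \gbas runs for a random number of rounds, with finite expectation by \Cref{thm:huber}(ii) --- so the count above is an expectation; if a high-probability guarantee on the total sample count is desired, it follows from a Markov bound on each call combined with a union bound over the $\alpha$ calls. Beyond this, the argument is pure bookkeeping on top of \Cref{bcl:estimatel1}, so there is no real obstacle.
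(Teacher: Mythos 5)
Your proposal is correct and follows essentially the same route as the paper's proof: both multiply the expected per-call cost of \estimate from \Cref{bcl:estimatel1} by the $\alpha$ iterations of the main loop, substitute $\gamma=\zeta/(1.11(2+\zeta))$ and $\delta'=\delta/2\alpha$, and invoke Markov's inequality to pass from the expected count to the stated total sample bound. Your explicit accounting of the $\alpha$ direct draws and the remark on the expectation-versus-high-probability issue is slightly more careful than the paper's one-line argument, but it is the same proof.
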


\begin{proof}
From the description of the algorithm, we know $\alpha=\frac{1}{2\zeta^2} \log \frac{4}{\delta}$. From \Cref{bcl:estimatel1}, we know that \estimate requires $\Oh\left(\frac{n^2}{\gamma^2} \cdot \log\left(\frac{2n}{\delta'}\right)\right)$ samples on expectation in every iteration, where $\delta'=\delta/2 \alpha$ and $\gamma=\zeta/1.11(2+ \zeta)$. Since \gst runs $\alpha$ iterations, using Markov inequality, the total sample complexity of \gst is $\Oh\left(\frac{n^2 (2+\zeta)^2}{\zeta^4} \cdot \log \left(\frac{4n \alpha}{\delta}\right) \log \frac{4}{\delta} \right)$.  

\end{proof}

\section{Correctness of Tester}
Similar to the analysis of the estimator, for ease of presentation, we will denote the distribution corresponding to the unknown self-reducible sampler $\igs$ on input $\psi$ as $P$ and distribution corresponding to the known self-reducible sampler $\iws$ on input $\psi$ as $Q$.

\begin{theo}
Consider an unknown self-reducible sampler $\igs$, a known self-reducible sampler $\iws$, an input $\psi$, closeness parameter $\eps \in (0,1)$, farness parameter $\eta \in (0,1)$ with $\eta >\eps$ and a confidence parameter $\delta \in (0,1)$. There exists a $(\eps,\eta, \delta)$-\whst \ \gstt that takes $\widetilde{\mathcal{O}}\left(n^2/\left(\eta-\eps\right)^4\right)$  samples.
\end{theo}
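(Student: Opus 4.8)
The plan is to obtain $\gstt$ from the distance estimator $\gst$ by the ``testing via learning'' reduction: estimate $\dtvp(\igs,\iws)$ to additive accuracy equal to half the promised gap $\eta-\eps$, then threshold the estimate at the midpoint $K=(\eta+\eps)/2$. Since all randomness is confined to the single call to $\gst$, the argument reduces to a deterministic interval computation on the estimator's good event, the quantitative work having already been done in the analysis of $\gst$.

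First I would instantiate the estimator exactly as in Algorithm~\ref{balg:indentl1test}: set $\zeta=(\eta-\eps)/2$, $K=(\eta+\eps)/2$, and run $\gst(\igs,\iws,\psi,\zeta,\delta_t)$ with an intermediate confidence $\delta_t$ small enough (e.g.\ $\delta_t=\delta$) that, by Theorem~\ref{theo:estapp}, the returned value $\widehat{\mathsf{dist}}$ satisfies
\[
\dtvp(\igs,\iws)-\zeta \;\le\; \widehat{\mathsf{dist}} \;\le\; \dtvp(\igs,\iws)+\zeta
\]
with probability at least $1-\delta$. I then condition on this event; the tester draws no further samples.

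Next I would check the two cases. If $\igs$ is $\eps$-close to $\iws$ on $\psi$, then $\widehat{\mathsf{dist}}\le\dtvp(\igs,\iws)+\zeta\le\eps+\tfrac{\eta-\eps}{2}=K$, the rejection test is not met, and $\gstt$ outputs \accept. If $\igs$ is $\eta$-far from $\iws$ on $\psi$, then $\widehat{\mathsf{dist}}\ge\dtvp(\igs,\iws)-\zeta\ge\eta-\tfrac{\eta-\eps}{2}=K$, and $\gstt$ outputs \reject. The one point that needs care --- the ``hard part'', since everything else is bookkeeping --- is that the estimator's guarantee is non-strict, so the far-case bound reads $\widehat{\mathsf{dist}}\ge K$ rather than the strict $\widehat{\mathsf{dist}}>K$ the test asks for; this is handled cosmetically by running $\gst$ with accuracy $(\eta-\eps)/2-\nu$ for an arbitrarily small $\nu>0$ (affecting only constants, absorbed below), or equivalently by changing the test to ``$\widehat{\mathsf{dist}}\ge K$''. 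On the good event $\gstt$ is then correct in both cases, i.e.\ it is an $(\eps,\eta,\delta)$-tester.

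Finally I would read off the sample complexity: the tester's only samples are those of the single $\gst$ call with accuracy $\zeta=\Theta(\eta-\eps)$, which by Lemma~\ref{lem:sampleest} is $\Oh\!\big(\tfrac{n^2(2+\zeta)^2}{\zeta^4}\log\tfrac{4n\alpha}{\delta}\log\tfrac4\delta\big)$ with $\alpha$ as defined there; since $\alpha$ is logarithmic in the parameters, this collapses to $\widetilde{\mathcal O}\!\big(n^2/(\eta-\eps)^4\big)$. The two remarks after the statement then fall out of the same analysis: $\ell_\infty$-closeness $(1-\eps)\cD^{\iws,\psi}(x)\le\cD^{\igs,\psi}(x)\le(1+\eps)\cD^{\iws,\psi}(x)$ gives $\dtvp(\igs,\iws)=\tfrac12\sum_x\big|\cD^{\igs,\psi}(x)-\cD^{\iws,\psi}(x)\big|\le\tfrac{\eps}{2}\le\eps$, so the close case applies and $\gstt$ accepts; and on a \reject, the average $val/\alpha$ exceeds $K>0$, so some sampled $x$ has $\max\!\big(0,1-\cD^{\iws,\psi}(x)/\widehat{\cD^{\igs,\psi}_x}\big)>K$, whence unwinding $\estimate$ on that $x$ exposes a coordinate $i$ and a residual configuration $\psi_e=\mathsf{SubCond}(\psi,x_1\cdots x_{i-1})$ on which the $i$-th conditional marginals of $\igs$ and $\iws$ disagree --- the promised witness of $\eta$-farness.
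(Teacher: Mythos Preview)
Your reduction is correct and is exactly the ``testing via learning'' approach the paper advertises. The one substantive point where you and the paper differ is in how the correctness is argued: you invoke the estimator guarantee of Theorem~\ref{theo:estapp} as a black box, obtain $\widehat{\mathsf{dist}}\in[\dtvp-\zeta,\dtvp+\zeta]$ on the good event, and threshold at $K$; this forces you to deal with the non-strict boundary $\widehat{\mathsf{dist}}=K$, which you patch with an $\nu$-slack. The paper instead unrolls the estimator's internals: it conditions on the event that every call to $\estimate$ succeeds, bounds $\E_{x\sim P}[\max(0,1-Q(x)/\widehat{P}(x))]$ by $\tfrac{\eta+3\eps}{4}$ (resp.\ below by $\tfrac{3\eta+\eps}{4}$), and then applies Chernoff--Hoeffding with a genuine $(\eta-\eps)/4$ gap on each side of $K$. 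This explains the otherwise odd setting $\delta_t=2\delta$ in Algorithm~\ref{balg:indentl1test} and makes the strict-vs-nonstrict issue disappear without a slack parameter. Your route is cleaner and more modular; the paper's buys an explicit gap and slightly tighter constants.

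Two small slips worth fixing. First, you write ``since $\alpha$ is logarithmic in the parameters'': it is not---$\alpha=\Theta(\zeta^{-2}\log(1/\delta))$---but this does not matter because $\alpha$ only enters the sample bound through $\log(n\alpha/\delta)$, so the $\widetilde{\Oh}$ conclusion stands. Second, your witness-extraction sketch overreaches a bit: from $val/\alpha>K$ you can indeed locate some $x$ with $1-\cD^{\iws,\psi}(x)/\widehat{\cD^{\igs,\psi}_x}>K$, but the step ``unwinding $\estimate$ exposes a coordinate $i$ on which the conditional marginals disagree'' does not follow from that inequality alone---a product of marginals can be off by a large factor without any single factor being far. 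The paper's remark only promises a configuration $\psi_e$ witnessing farness, not a single bad marginal, and it does not prove this remark formally; so I would either soften your claim to match the paper's statement or drop that paragraph, since it is not part of the theorem.
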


We will prove the correctness of \gstt into two parts:
\begin{itemize}
    \item[1] \emph{Completeness:} If $\cI_{\cG}$ is $\eps$-close to $\iws$ on input $\psi$, \gstt outputs \accept.
    
    \item[2] \emph{Soundness:} if $\cI_{\cG}$ is $\eta$-far from $\iws$ on input $\psi$, \gstt outputs \reject.
\end{itemize}

\subsection*{Proof of Completeness}
\begin{lem}\label{lem:l1indcomp}
Suppose $\igs$ is $\eps$-close to the known self-reducible sampler $\iws$ on input $\psi$. Then \gstt outputs \accept with probability at least $1- \delta$.

\end{lem}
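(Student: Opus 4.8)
The plan is to trace the behavior of \gstt (\Cref{balg:indentl1test}) under the hypothesis $\dtvp(\igs,\iws) \le \eps$, reducing everything to the guarantee already established for \gst. First I would recall the parameter settings inside \gstt: $\zeta = (\eta-\eps)/2$, $\delta_t = 2\delta$, and the threshold $K = (\eta+\eps)/2$. The single call $\widehat{\mathsf{dist}} = \gst(\igs,\iws,\psi,\zeta,\delta_t)$ then, by \Cref{theo:estapp} (the correctness of \gst as a $(\zeta,\delta_t)$-approx $\dtv$ estimator), satisfies
\[
\dtvp(\igs,\iws) - \zeta \;\le\; \widehat{\mathsf{dist}} \;\le\; \dtvp(\igs,\iws) + \zeta
\]
with probability at least $1 - \delta_t = 1 - 2\delta$. (One should note the harmless mismatch in the confidence bookkeeping — it suffices to run \gst with confidence parameter $\delta$ rather than $2\delta$, or equivalently to observe $1-2\delta \ge 1-\delta$ is false, so in fact the clean statement wants the call to be made at confidence $\delta$; I would simply invoke \gst at confidence level giving failure probability at most $\delta$, which the claimed sample bound still accommodates up to the $\widetilde{\Oh}$.)

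Next I would do the deterministic arithmetic on the good event. Conditioned on the estimator succeeding, and using $\dtvp(\igs,\iws) \le \eps$ (the $\eps$-closeness hypothesis),
\[
\widehat{\mathsf{dist}} \;\le\; \dtvp(\igs,\iws) + \zeta \;\le\; \eps + \frac{\eta-\eps}{2} \;=\; \frac{\eta+\eps}{2} \;=\; K.
\]
Hence $\widehat{\mathsf{dist}} \le K$, so the test at \Cref{line:tester:line5} fails and \gstt returns \accept at \Cref{line:tester:line7}. Therefore \gstt outputs \accept whenever the \gst call is accurate, which happens with probability at least $1-\delta$. This establishes the lemma.

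The argument is essentially routine once \Cref{theo:estapp} is in hand; the only "obstacle" is cosmetic — making sure the confidence parameter passed to \gst (and the way $\delta_t$ is defined) is consistent with a final $1-\delta$ guarantee, and that the sample-complexity substitution $\zeta = (\eta-\eps)/2$ into the $\widetilde{\Oh}(n^2/\zeta^4)$ bound of \gst yields the stated $\widetilde{\Oh}(n^2/(\eta-\eps)^4)$ (which it does, as $1/\zeta^4 = 16/(\eta-\eps)^4$). I would state the sample-complexity claim as an immediate corollary of \Cref{lem:sampleest} at the end, and then the soundness direction (that $\eta$-farness forces $\widehat{\mathsf{dist}} > K$ via the matching lower estimate $\widehat{\mathsf{dist}} \ge \dtvp(\igs,\iws) - \zeta \ge \eta - (\eta-\eps)/2 = K$, with strictness handled by noting the inequality is $\ge$ and the test is $>$ — so one should use farness $\dtvp \ge \eta$ together with a slightly strict estimator guarantee, or absorb the boundary case into $\delta$) would be handled symmetrically in the companion lemma.
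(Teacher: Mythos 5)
Your route is the natural modular one: treat \gst as a black box via \Cref{theo:estapp}, and note that under $\eps$-closeness the estimate satisfies $\widehat{\mathsf{dist}} \leq \eps + \zeta = K$, so the strict test $\widehat{\mathsf{dist}} > K$ fails and \gstt accepts. The arithmetic is correct, and it is genuinely different from the paper's proof, which does \emph{not} invoke the estimator guarantee as a black box: it re-opens \gst, conditions on the event that \estimate succeeds, bounds $\E_{x\sim P}[\max(0,1-Q(x)/\widehat{P}(x))] \leq \eps + \gamma/(1-\gamma) = (\eta+3\eps)/4$, and then applies a one-sided Chernoff--Hoeffding bound to $\Pr(val/\alpha > K)$.

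The reason the paper takes the longer route is exactly the point you flag and then set aside as ``cosmetic'': \gstt calls \gst with $\delta_t = 2\delta$, so the black-box guarantee of \Cref{theo:estapp} only yields success probability $1-2\delta$, which does not give the claimed $1-\delta$. Your proposed fix is to change the algorithm to call \gst at confidence $\delta$; that proves a statement about a modified tester, not the lemma about \Cref{balg:indentl1test} as written. The paper recovers the factor of $2$ without changing the algorithm by exploiting one-sidedness: in the completeness case only the upper deviation $\Pr(val/\alpha > K)$ matters (the estimator's failure budget $\delta_t$ is split between two tails and the \estimate failures, and only one tail is needed here), so the deviation term costs $\delta/2$ and the \estimate failures the remaining $\delta/2$. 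So either adopt the paper's inside-the-estimator analysis, or explicitly restate the tester with the confidence parameter passed as $\delta$ (which is harmless for the $\widetilde{\Oh}$ sample bound but is a change to the algorithm); as written, your black-box step leaves a genuine gap between $1-2\delta$ and the claimed $1-\delta$.
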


\begin{proof}[Proof of Lemma~\ref{lem:l1indcomp}]

Let us first consider the event $\cA$:
$$\cA:=\estimate \ \mbox{outputs a correct estimate of P(x)}$$

From \Cref{bcl:estimatel1}, we know that $\Pr(\cA)=1-\delta'$, where $\delta'=\delta/2\alpha$. Let us now work on the conditional space that event $\cA$ holds.

Since $\dtv(P,Q)\leq \eps$, from Lemma~\ref{lem:dtvexpl1} we know that $$\E\limits_{x\sim P}\left[\max\left(0, 1 - \frac{Q(x)}{P(x)}\right)\right] \leq \eps$$

Therefore following the same arguments as in \Cref{lem:dtvexpl1} we have $$\E\limits_{x\sim P}\left[\max\left(0, 1 - \frac{Q(x)}{\widehat{P}(x)}\right)\right] \leq \eps + \frac{\gamma}{1-\gamma}$$

Recall that $\frac{2\gamma}{1 - \gamma}=\frac{\eta-\eps}{2}$. 

We derive
\begin{eqnarray*}
\E\limits_{x \sim P}\left[\max\left(0, 1- \frac{Q(x)}{\widehat{P}(x)}\right)\right] &\leq& \eps + \frac{(\eta-\eps)}{4}=\frac{\eta+3\eps}{4} 
\end{eqnarray*}

Recall that $K = \frac{\eta+\eps}{2}$ and $\alpha = \frac{8}{(\eta - \eps)^2} \log \left(\frac{2}{\delta}\right)$. Using the Chernoff-Hoeffding bound we have,
\begin{eqnarray*}
    \Pr\left(\frac{val}{\alpha} > K \right) 
    &=& \Pr\left(\frac{val}{\alpha} > \frac{\eta + 3\eps}{4} + \frac{\eta - \eps}{4}  \right) \\
    &\leq& \exp\left(-2\left(\frac{\eta - \eps}{4}\right)^2 \cdot \alpha\right) \\ 
    &\leq& \frac{\delta}{2}
\end{eqnarray*}

Thus, under the conditional space that the event $\cA$ holds, our algorithm accepts with probability at least $1- \frac{\delta}{2}$. Since $\Pr(\cA) \geq 1- \delta'$ with $\delta'=\delta/2\alpha$ and the subroutine \estimate is called for $\alpha$ times, the total error following \estimate is bounded by $\delta/2$. of our algorithm is bounded by $\alpha \cdot \delta' + \frac{\delta}{2}$. Thus the total error of our algorithm \gst is bounded by $\delta$. So we conclude that when $\dtv(P,Q) \leq \eps$, \gst accepts with probability at least $1- \delta$.

\end{proof}

\subsection*{Proof of Soundness}

\begin{lem}
Suppose $\igs$ is $\eta$-far from the known self-reducible sampler $\iws$ on input $\psi$. Then \gst outputs \reject with probability at least $1- \delta$.
\end{lem}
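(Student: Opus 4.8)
The plan is to mirror the completeness argument, but now lower-bounding the estimated distance $\mathsf{val}/\alpha$ rather than upper-bounding it. First I would condition on the event $\mathcal{A}$ that \estimate outputs a correct multiplicative estimate $\widehat{P}(x)$ of $P(x)$ for every $x$ in the sample multiset $S$; by \Cref{bcl:estimatel1} and a union bound over the $\alpha$ invocations, this holds with probability at least $1-\delta/2$. Working in that conditional space, the starting point is Lemma~\ref{lem:dtvexpl1}, which gives $\dtv(P,Q) = \E_{x\sim P}\max(0,1-Q(x)/P(x))$; since $\igs$ is $\eta$-far, this expectation is at least $\eta$. Then, using the two-sided bound established inside the proof of the estimator lemma (namely $\max(0,1-Q(x)/P(x)) - |Q(x)/P(x) - Q(x)/\widehat{P}(x)| \leq \max(0,1-Q(x)/\widehat{P}(x))$) together with \Cref{bclaim:estclosel1}, I would conclude
\[
\E_{x\sim P}\left[\max\left(0,1-\frac{Q(x)}{\widehat{P}(x)}\right)\right] \geq \eta - \frac{\gamma}{1-\gamma} \geq \eta - \frac{\eta-\eps}{4} = \frac{3\eta+\eps}{4},
\]
using $\frac{2\gamma}{1-\gamma} = \frac{\eta-\eps}{2}$ as set by the tester.

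Next I would apply the Chernoff--Hoeffding bound (\Cref{lem:cher_bound2}) to the sum $\mathsf{val} = \sum_{i=1}^\alpha \max(0,1-Q(x_i)/\widehat{P}(x_i))$, whose summands lie in $[0,1]$ and whose mean is at least $\frac{3\eta+\eps}{4}$. Since the threshold is $K = \frac{\eta+\eps}{2} = \frac{3\eta+\eps}{4} - \frac{\eta-\eps}{4}$, the probability that $\mathsf{val}/\alpha$ falls below $K$ is at most $\exp\!\left(-2\left(\frac{\eta-\eps}{4}\right)^2\alpha\right)$, which is at most $\delta/2$ for the chosen $\alpha = \frac{8}{(\eta-\eps)^2}\log(2/\delta)$ (recall $\gstt$ calls $\gst$ with $\zeta = (\eta-\eps)/2$ and $\delta_t = 2\delta$, so the internal $\alpha$ is $\frac{2}{\zeta^2}\log\frac{4}{\delta_t} = \frac{8}{(\eta-\eps)^2}\log\frac{2}{\delta}$). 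Hence, conditioned on $\mathcal{A}$, the tester rejects with probability at least $1-\delta/2$.

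Finally I would combine the two failure modes: $\Pr(\neg\mathcal{A}) \leq \delta/2$ and, given $\mathcal{A}$, the tester fails to reject with probability at most $\delta/2$; a union bound yields total error at most $\delta$, so \gstt rejects with probability at least $1-\delta$. The sample complexity claim of the theorem follows directly from \Cref{lem:sampleest} with $\zeta = (\eta-\eps)/2$, giving $\widetilde{\Oh}(n^2/(\eta-\eps)^4)$. I expect the only mildly delicate point to be bookkeeping the parameter substitutions between $\gstt$ and the internal call to $\gst$ (the factor-of-two adjustments in $\delta$ and the identification $\zeta = (\eta-\eps)/2$), and making sure the one-sided direction of the triangle-inequality bound from the estimator analysis is the one actually needed here; the concentration step itself is routine.
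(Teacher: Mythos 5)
Your proposal follows essentially the same route as the paper's proof: condition on the event that \estimate succeeds, use \Cref{lem:dtvexpl1} together with the triangle-inequality bound and \Cref{bclaim:estclosel1} to lower-bound $\E_{x \sim P}\left[\max\left(0, 1- \frac{Q(x)}{\widehat{P}(x)}\right)\right]$ by $\frac{3\eta+\eps}{4}$, apply Chernoff--Hoeffding against the threshold $K=\frac{\eta+\eps}{2}$, and union-bound the two failure modes. Your explicit bookkeeping of the parameter substitutions $\zeta = (\eta-\eps)/2$ and $\delta_t = 2\delta$ in the call to \gst is in fact slightly more careful than the paper's write-up, but the argument is the same.
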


\begin{proof}
Like the completeness proof (\Cref{lem:l1indcomp}), we first work on the conditional space that the event $\cA$ holds.

From the assumption, we know that $P$ and $Q$ are $\eta$-far, that is, $\dtv(P,Q) \geq \eta$. From \Cref{lem:dtvexpl1}, we know that
$$\E\limits_{x\sim P}\left[\max\left(0, 1 - \frac{Q(x)}{P(x)}\right)\right] \geq \eta$$

Therefore following the same arguments as in \Cref{lem:dtvexpl1} we have $$\E\limits_{x\sim P}\left[\max\left(0, 1 - \frac{Q(x)}{\widehat{P}(x)}\right)\right] \geq \eta - \frac{\gamma}{1-\gamma}$$

Recall that $\frac{2\gamma}{1 - \gamma}=\frac{\eta-\eps}{2}$. 

We derive
\begin{eqnarray*}
\E\limits_{x \sim P}\left[\max\left(0, 1- \frac{Q(x)}{\widehat{P}(x)}\right)\right] &\geq& \eta - \frac{(\eta-\eps)}{4}=\frac{3\eta+\eps}{4} 
\end{eqnarray*}

Recall that $K = \frac{\eta+\eps}{2}$ and $\alpha = \frac{8}{(\eta - \eps)^2} \log \left(\frac{2}{\delta}\right)$. Using the Chernoff-Hoeffding bound we have,
\begin{eqnarray*}
    \Pr\left(\frac{val}{\alpha} < K \right) 
    &=& \Pr\left(\frac{val}{\alpha} < \frac{3\eta + \eps}{4} - \frac{\eta - \eps}{4}  \right) \\
    &\leq& \exp\left(-2\left(\frac{\eta - \eps}{4}\right)^2 \cdot \alpha\right) \\ 
    &\leq& \frac{\delta}{2}
\end{eqnarray*}

Thus, under the conditional space that the event $\cA$ holds, our algorithm rejects with probability at least $1- \frac{\delta}{2}$. Since $\Pr(\cA) \geq 1- \delta'$ with $\delta'=\delta/2\alpha$ and the subroutine \estimate is called for $\alpha$ times, the total error following \estimate is bounded by $\delta/2$. of our algorithm is bounded by $\alpha \cdot \delta' + \frac{\delta}{2}$. Thus the total error of our algorithm \gst is bounded by $\delta$. So we conclude that when $\dtv(P,Q) \geq \eta$, \gst rejects with probability at least $1- \delta$.

\end{proof}

The sample complexity bound of \gstt follows from \Cref{lem:sampleest}.
\section{Extended Experimental Results}

\noindent In this section we describe the extended experimental results of \gst and \gstt. As mentioned in the main paper, for each instance-sampler pair, 8 cores have been employed. All the experiments have been carried out in a high-performance cluster, where each node consists of AMD EPYC 7713 CPUs with $2 \times 64$ cores and 512 GB memory. Here we provide two sets of experimental results with different parameter settings. 
\begin{itemize}
    \item[1] \emph{Experiment-1}: \Cref{tab:all-1} and \Cref{tab:all-1b} show the outcomes of Experiment-1. For the estimator \gst, the approximation parameter $\zeta$ is set to $0.3$, and for the tester $\gstt$ the tolerance parameter $\eps$ and intolerance parameter $\eta$ are set to $0.005$ and $0.605$ respectively. The experiment was run with a cutoff time of 10 hrs. The experiment ended for all instances and for all the samplers within the time-limit. 
    \item[2] \emph{Experiment-2}:  \Cref{tab:all-2a} and \Cref{tab:all-2b} show the outcomes of Experiment-2. For the estimator \gst, the approximation parameter $\zeta$ is set to $0.2$, and for the tester $\gstt$ the tolerance parameter $\eps$ and intolerance parameter $\eta$ are set to $0.005$ and $0.405$ respectively. The experiment was run with a cutoff time of 24 hours. Notably, with regard to \lcms, the experiment concluded for all but one instance. Concerning \lqck~ and \lsts, certain instances exceeded the time limit, denoted by "TLE" in Table \ref{tab:all-2a} and Table \ref{tab:all-2b}.
\end{itemize}

\begin{figure}[!htb]
    \centering
    \begin{subfigure}[b]{0.3\textwidth}
         \centering
         \includegraphics[scale = 0.4]{dimvsdtv-1.png}
         \caption{~~~~~{Experiment-1}}
         \label{fig:dimdtv1}
    \end{subfigure}

    \begin{subfigure}[b]{0.3\textwidth}
         \centering
         \includegraphics[scale = 0.4]{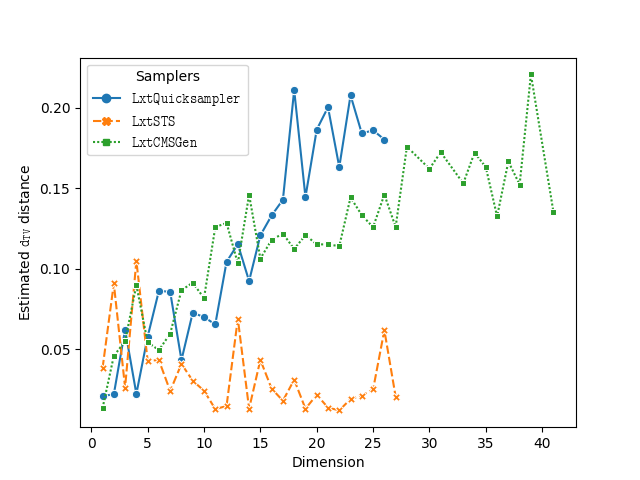}         
         \caption{~~~~~{Experiment-2}}
         \label{fig:dimdtv2}
    \end{subfigure}
    
    \caption{Estimated TV distance of samplers as a function of dimension. For each dimension, we take the median $\dtv$ over all the instances of that dimension.}
    \label{fig:dimdtv-all}
\end{figure}

\vspace{-1pt}
The trends observed in both Figure \ref{fig:dimdtv1} and Figure \ref{fig:dimdtv2} consistently highlight the fact that \lqck\ is significantly distant from uniform behavior and tends to perform even less effectively in scenarios involving higher dimensions. Conversely, \lsts\ demonstrates relatively satisfactory performance even as the dimensions increase.

\begin{table*}[t]

\caption{Table of Experiment-2: for each sampler the three columns represent the estimated $\dtv$, number of samples consumed by \gst, and the output of \gstt. ``A'' and ``R'' represents \accept and \reject respectively. ``TLE'' indicates ``Time limit exceeded''.}\label{tab:all-2b}
\end{table*}

\end{document}